\newcommand{\defn}{\,{\stackrel{\triangle}{=}}\,}
\def\m(#1){\mathcal{#1}}
\def\row(#1,#2){#1^{\textrm{row}}_#2}
\def\col(#1,#2){#1^{\textrm{col}}_#2}
\newcommand{\sinc}{\operatorname{sinc} }
\newcommand{\rect}{\operatorname{rect} }
\newcommand{\Kset}{\mathcal{K}}
\newcommand{\tabW}{p{2.4cm}}
\def\colorname(#1){#1}
\newtheorem{theorem}{\colorname(Theorem)}
\newtheorem{corollary}{\colorname(Corollary)}
\newcommand{\RonenComment}[1]{#1} % Include comment
\title{Innovation Rate Sampling of Pulse Streams with Application to Ultrasound Imaging}
\author{Ronen Tur, Yonina C.~Eldar,~\IEEEmembership{Senior~Member,~IEEE} and Zvi Friedman\thanks{Copyright (c) 2010 IEEE. Personal use of this material is permitted. However, permission to use this material for any other purposes must be obtained from the IEEE by sending a request to pubs-permissions@ieee.org.

Department of Electrical
Engineering, Technion---Israel Institute of Technology, Haifa 32000, Israel. Phone: +972-4-8293256, fax: +972-4-8295757,
E-mail: \{ronentur@techunix,yonina@ee\}.technion.ac.il, zvi.friedman@med.ge.com. Y. Eldar is currently a Visiting Professor at Stanford, CA. This work was supported in part by a
Magneton grant from the Israel Ministry of Industry and Trade.}}
\begin{document}
\bibliographystyle{IEEEtran}
\maketitle
\begin{abstract}
Signals comprised of a stream of short pulses appear in many
applications including bio-imaging and radar. The recent finite
rate of innovation framework, has paved the way to low rate sampling of such
pulses by noticing that only a small number of
parameters per unit time are needed to fully describe these
signals. Unfortunately, for high rates of innovation, existing
sampling schemes are numerically unstable. In this paper we propose a
general sampling approach which leads to stable recovery even in
the presence of many pulses. We begin by deriving a condition on
the sampling kernel which allows perfect reconstruction of
periodic streams from the minimal number of samples. We then design a compactly supported class of filters,
satisfying this condition. The periodic solution is extended to finite and infinite streams,
and is shown to be numerically stable even for a large
number of pulses. High noise robustness is
also demonstrated when the delays are sufficiently
separated. Finally, we process ultrasound imaging data using our techniques, and show that substantial rate reduction with
respect to traditional ultrasound sampling schemes can be achieved.
\end{abstract}

\begin{keywords}
Analog-to-digital conversion, annihilating filters, finite rate of innovation, compressed sensing, perfect reconstruction, ultrasound imaging, sub-Nyquist sampling.
\end{keywords}

%%%%%%%%%%%%%%%%%%%%%%%%%%%%%%%%%%%%%%%%%%%%%%%%%%%%%%%%%%%%%
\section{Introduction}
\PARstart{S}{ampling} is the process of representing a continuous-time signal
by discrete-time coefficients, while retaining the important
signal features. The well-known
Shannon-Nyquist theorem states that the minimal sampling rate
required for perfect reconstruction of bandlimited signals is
twice the maximal frequency. This
result has since been generalized to minimal rate sampling schemes for
signals lying in arbitrary subspaces
\cite{BeyondBanlimitedSampling_YoninaTomer2009,TomerYoninaSamplingBookChapter}.

Recently, there has been growing interest in sampling of signals
consisting of a stream of short pulses, where the pulse shape is
known. Such signals have a finite number of degrees of freedom per
unit time, also known as the Finite Rate of Innovation (FRI)
property \cite{Vetterli2002_Basic}. This interest is motivated by
applications such as digital processing of neuronal signals,
bio-imaging, image processing and ultrawideband (UWB)
communications, where such signals are present in abundance. Our
work is motivated by the possible application of this model in
ultrasound imaging, where echoes of the transmit pulse are reflected off scatterers within the tissue, and form a stream of pulses signal at the receiver. The time-delays and amplitudes
of the echoes indicate the position and strength of the various
scatterers, respectively. Therefore, determining these parameters
from low rate samples of the received signal is an important
problem. Reducing the rate allows more efficient processing which
can translate to power and size reduction of the ultrasound
imaging system.

Our goal is to design a minimal rate single-channel sampling and
reconstruction scheme for pulse streams that is stable even in the
presence of many pulses. Since the set of FRI signals does not
form a subspace, classic subspace schemes cannot be directly used
to design low-rate sampling schemes. Mathematically, such FRI signals
conform with a broader model of signals lying in a union of
subspaces \cite{UnionOfSubspaces_LuDo2008,
Yonina2009compressed,RobustRecovery_UnionOfSubspaces_YoninaMishali2009,KfirYonina2009,
mishali2009blind,XamplingPractice}.
Although the minimal sampling rate required for such settings has been
derived, no generic sampling scheme exists for the general
problem. Nonetheless, some special cases have been treated in
previous work, including streams of pulses.

A stream of pulses can be viewed as a parametric signal, uniquely
defined by the time-delays of the pulses and their amplitudes. Efficient sampling of periodic impulse streams, having
$L$ impulses in each period, was proposed in
\cite{Vetterli2002_Basic,VetterliMagazine2008}. The heart of the solution is to obtain a set of Fourier series coefficients, which then converts the
problem of determining the time-delays and amplitudes to that of finding the frequencies and amplitudes of a
sum of sinusoids. The latter is a standard problem in spectral
analysis \cite{Stoica1997} which can be solved using conventional
methods, such as the annihilating filter approach, as long as the
number of samples is at least $2L$. This result is intuitive
since there are $2L$ degrees of freedom in each period: $L$
time-delays and $L$ amplitudes.

Periodic streams of pulses are mathematically convenient to
analyze, however not very practical. In contrast, finite streams
of pulses are prevalent in applications such as ultrasound
imaging. The first treatment of finite Dirac streams appears in
\cite{Vetterli2002_Basic}, in which a Gaussian sampling kernel was
proposed. The time-delays and amplitudes are then estimated from
the Gaussian tails. This method and its improvement
\cite{Vetterli_FRI_InThePresenceOfNoise} are numerically unstable
for high rates of innovation, since they rely on the Gaussian tails which take on small values. The work in \cite{DragottiStrangFix2007} introduced a general family of polynomial and exponential reproducing kernels, which can be used to solve FRI problems. Specifically, B-spline and E-spline sampling kernels which satisfy the reproduction condition are proposed. This method treats streams of Diracs, differentiated
Diracs, and short pulses with compact support. However, the proposed
sampling filters result in poor reconstruction results for large $L$. To the best
of our knowledge, a numerically stable sampling and reconstruction
scheme for high order
problems has not yet been reported.

Infinite streams of pulses arise in applications such as UWB
communications, where the communicated data changes frequently.
Using spline filters \cite{DragottiStrangFix2007},
and under certain limitations on the signal, the infinite stream
can be divided into a sequence of separate finite problems. The
individual finite cases may be treated using methods for the
finite setting, at the expense of above critical sampling rate, and suffer from the same instability issues. In addition, the
constraints that are cast on the signal become more and more
stringent as the number of pulses per unit time grows. In a recent
work \cite{UnserFRI2008} the authors propose a sampling and
reconstruction scheme for $L=1$, however, our interest here is in high values of $L$.

Another related work \cite{KfirYonina2009} proposes a
semi-periodic model, where the pulse time-delays do not change
from period to period, but the amplitudes vary. This is a hybrid
case in which the number of degrees of freedom in the time-delays
is finite, but there is an infinite number of degrees of freedom
in the amplitudes. Therefore, the proposed recovery scheme
generally requires an infinite number of samples. This differs
from the periodic and finite cases we discuss in this paper which
have a finite number of degrees of freedom and, consequently,
require only a finite number of samples.

In this paper we study sampling of signals consisting of a stream
of pulses, covering the three different cases: periodic, finite
and infinite streams of pulses. The criteria we consider for
designing such systems are: a) Minimal sampling rate which allows
perfect reconstruction, b) numerical stability (with sufficiently
separated time delays), and c) minimal restrictions on the number
of pulses per sampling period.

We begin by treating periodic pulse streams. For this setting, we
develop a general sampling scheme for arbitrary pulse shapes which allows to determine the
times and amplitudes of the pulses, from a minimal number of
samples. As
we show, previous work \cite{Vetterli2002_Basic} is a special case
of our extended results. In contrast to the infinite time-support
of the filters in \cite{Vetterli2002_Basic}, we develop a
compactly supported class of filters which satisfy our
mathematical condition. This class of filters consists of a sum of
sinc functions in the frequency domain. We therefore refer to such
functions as \textit{Sum of Sincs} (SoS). To the best of our
knowledge, this is the first class of finite support filters that
solve the periodic case. As we discuss in detail in Section~\ref{SingleCh:sec_related_work}, these filters are related to exponential reproducing kernels, introduced in \cite{DragottiStrangFix2007}.

The compact support of the SoS filters is the key to
extending the periodic solution to the finite stream case.
Generalizing the SoS class, we design a sampling and
reconstruction scheme which perfectly reconstructs a finite stream
of pulses from a minimal number of samples, as long as the pulse
shape has compact support. Our reconstruction is numerically
stable for both small values of $L$ and large number of pulses,
e.g., $L=100$. In contrast, Gaussian sampling filters
\cite{Vetterli2002_Basic,Vetterli_FRI_InThePresenceOfNoise} are
unstable for $L>9$, and we show in simulations that B-splines and E-splines \cite{DragottiStrangFix2007} exhibit large estimation
errors for $L \geq 5$. In addition, we demonstrate substantial
improvement in noise robustness even for low values of $L$. Our
advantage stems from the fact that we propose compactly supported
filters on the one hand, while staying within the regime of
Fourier coefficients reconstruction on the other hand. Extending our results to the infinite setting, we consider an infinite stream consisting of pulse bursts, where
each burst contains a large number of pulses. The stability of our method
allows to reconstruct even a large number
of closely spaced pulses, which cannot be treated using existing
solutions \cite{DragottiStrangFix2007}. In addition, the constraints cast on the
structure of the signal are independent of $L$ (the number of
pulses in each burst), in contrast to previous work, and therefore similar sampling schemes may
be used for different values of $L$. Finally, we show that our
sampling scheme requires lower sampling rate for $L\geq 3$.

As an application, we demonstrate our sampling
scheme on real ultrasound imaging data acquired by GE healthcare's
ultrasound system. We obtain high accuracy estimation while
reducing the number of samples by two orders of magnitude
in comparison with current imaging techniques.

The remainder of the paper is organized as follows. In Section
\ref{SingleCh:sec_periodicFRI} we present the periodic signal model, and
derive a general sampling scheme. The
SoS class is then developed and demonstrated via simulations. The extension to the finite case is presented in Section
\ref{SingleCh:sec_aperiodicFRI}, followed by simulations showing the
advantages of our method in high order problems and noisy
settings. In Section \ref{SingleCh:sec_infiniteFRI}, we treat infinite streams of pulses.
Section~\ref{SingleCh:sec_related_work} explores the relationship of our
work to previous methods. Finally, in
Section~\ref{SingleCh:sec_ultrasound}, we demonstrate our algorithm on real
ultrasound imaging data.

%%%%%%%%%%%%%%%%%%%%%%%%%%%%%%%%%%%%%%%%%%%%%%%%%%%%%%%%%%%%%
\section{Periodic Stream of Pulses}\label{SingleCh:sec_periodicFRI}
\subsection{Problem Formulation}\label{SingleCh:subsec_periodicFRI_problem_formulation}
Throughout the paper we denote matrices and vectors by bold font, with lowercase letters corresponding to vectors and uppercase letters to
matrices. The $n$th element of a vector $\mathbf{a}$ is written as $\mathbf{a}_{n}$, and $\mathbf{A}_{ij}$ denotes the
$ij$th element of a matrix $\mathbf{A}$. Superscripts $\left(\cdot\right)^{*}$, $\left(\cdot\right)^{T}$ and
$\left(\cdot\right)^{H}$ represent complex conjugation, transposition and conjugate transposition, respectively. The
Moore-Penrose pseudo-inverse of a matrix $\mathbf{A}$ is written as  $\mathbf{A}^{\dagger}$. The continuous-time Fourier transform (CTFT) of a
continuous-time signal $x\left(t\right)\in L_{2}$ is defined by
$X\left(\omega\right)=\int_{-\infty}^{\infty}x\left(t\right)e^{-j\omega t}{\rm d}t$, and
\begin{equation}
\left\langle x\left(t\right),y\left(t\right)\right\rangle =\int_{-\infty}^{\infty}x^*\left(t\right)y\left(t\right){\rm
d}t,
\end{equation}
denotes the inner product between two $L_2$ signals.

Consider a $\tau$-periodic stream of pulses, defined as
\begin{equation}\label{SingleCh:eq_sig_model_periodic}
  x(t) = \sum_{m\in \mathbb{Z}} \sum_{l=1}^L a_l h(t - t_l - m\tau),
\end{equation}
where $h(t)$ is a known pulse shape, $\tau$ is the known period,
and $\{t_l,a_l\}_{l=1}^L,\, t_l \in [0,\tau)$, $a_l \in
\mathbb{C}, \, l=1\ldots L$ are the unknown delays and amplitudes.
Our goal is to sample $x(t)$ and reconstruct it, from a minimal
number of samples. Since the signal has $2L$ degrees of freedom,
we expect the minimal number of samples to be $2L$. We are
primarily interested in pulses which have small time-support.
Direct uniform sampling of $2L$ samples of the signal will result
in many zero samples, since the probability for the sample to hit
a pulse is very low. Therefore, we must construct a more
sophisticated sampling scheme.

Define the periodic continuation of $h(t)$ as $f(t) = \sum_{m \in
\mathbb{Z}} h(t-m\tau)$. Using Poisson's summation formula
\cite{porat1997course}, $f(t)$ may be written as
\begin{equation}\label{SingleCh:eq_f_t}
  f(t) = \frac{1}{\tau} \sum_{k \in \mathbb{Z}} H\left(\frac{2\pi k}{\tau}\right) e^{j2\pi k t/\tau},
\end{equation}
where $H(\omega)$ denotes the CTFT of the pulse $h(t)$.
Substituting \eqref{SingleCh:eq_f_t} into \eqref{SingleCh:eq_sig_model_periodic} we obtain
\begin{eqnarray}
\nonumber
  x(t) &=& \sum_{l=1}^L a_l f(t - t_l) \\
\nonumber
  &=& \sum_{k \in \mathbb{Z}} \left(\frac{1}{\tau}H\left(\frac{2\pi k}{\tau}\right) \sum_{l=1}^L a_l e^{-j2\pi k t_l/\tau}\right)
  e^{j2\pi k t/\tau}  \\
\label{SingleCh:eq_x_t_Fourier_series_rep}
  &=& \sum_{k \in \mathbb{Z}} X[k] e^{j2\pi k t/\tau},
\end{eqnarray}
where we denoted
\begin{equation}\label{SingleCh:eq_X_k}
X[k] = \frac{1}{\tau} H\left(\frac{2\pi k}{\tau}\right) \sum_{l=1}^L a_l e^{-j2\pi k t_l/\tau}.
\end{equation}
The expansion in \eqref{SingleCh:eq_x_t_Fourier_series_rep} is the Fourier series representation of the $\tau$-periodic signal
$x(t)$ with Fourier coefficients given by \eqref{SingleCh:eq_X_k}.

Following \cite{Vetterli2002_Basic}, we now show that once $2L$ or more Fourier coefficients of $x(t)$ are known, we may use conventional tools from spectral
analysis to determine the unknowns $\{t_l,a_l\}_{l=1}^L$. The method by which the Fourier coefficients are
obtained will be presented in subsequent sections.

Define a set $\mathcal{K}$ of $M$ consecutive indices such that $H\left(\frac{2\pi k}{\tau}\right) \neq 0,\, \forall k\in\mathcal{K}$.
We assume such a set exists, which is usually the case for short time-support pulses $h(t)$. Denote by $\mathbf{H}$ the $M
\times M$ diagonal matrix with \textit kth entry $\frac{1}{\tau}H\left(\frac{2\pi k}{\tau}\right)$, and by $\mathbf{V}(\mathbf{t})$
the $M \times L$ matrix with \textit{kl}th element $e^{-j2\pi k t_l/\tau}$, where $\mathbf{t} = \{t_1,\ldots,t_L\}$ is the
vector of the unknown delays. In addition denote by $\mathbf{a}$ the length-$L$ vector whose \textit lth element is $a_l$, and
by $\mathbf{x}$ the length-$M$ vector whose \textit kth element is $X[k]$. We may then write \eqref{SingleCh:eq_X_k} in matrix form as
\begin{equation}\label{SingleCh:eq_X_k_matrix}
\mathbf{x} = \mathbf{H} \mathbf{V}(\mathbf{t}) \mathbf{a}.
\end{equation}
Since $\mathbf{H}$ is invertible by construction we define $\mathbf{y} = \mathbf{H}^{-1} \mathbf{x}$, which satisfies
\begin{equation}\label{SingleCh:eq_y_equal_x_deconvolved}
  \mathbf{y} = \mathbf{V}(\mathbf{t}) \mathbf{a}.
\end{equation}
The matrix $\mathbf{V}$ is a Vandermonde matrix and therefore has full column rank \cite{Stoica1997,hoffman1971linear} as long
as $M \geq L$ and the time-delays are distinct, i.e., $t_i \neq t_j$ for all $i \neq j$.

Writing the expression for the $k$th element of the vector $\mathbf{y}$ in \eqref{SingleCh:eq_y_equal_x_deconvolved} explicitly:
\begin{equation}
  \mathbf{y}_k = \sum_{l=1}^L a_l e^{-j2\pi k t_l/\tau}.
\end{equation}
Evidently, given the vector $\mathbf{x}$, \eqref{SingleCh:eq_y_equal_x_deconvolved} is a standard problem of finding the frequencies and
amplitudes of a sum of $L$ complex exponentials (see \cite{Stoica1997} for a review of this topic). This problem may be
solved as long as $|\mathcal{K}|=M \geq 2L$.
%Usually the set $\mathcal{K}$ consists of consecutive indices, but the
%general case of non-consecutive indices can be treated as well \cite{Stoica2006MissingDataCase}.

The annihilating filter approach used extensively by Vetterli et
al. \cite{Vetterli2002_Basic},\cite{VetterliMagazine2008} is one
way of recovering the frequencies, and is thoroughly described in the literature \cite{Stoica1997,Vetterli2002_Basic,VetterliMagazine2008}. This method can solve the
problem using the critical number of samples $M=2L$, as opposed to
other techniques such as MUSIC
\cite{MUSIC_Schmidt},\cite{MUSIC_Bienvenu} and ESPRIT
\cite{ESPRIT_Kailath} which require oversampling.
Since we are interested in minimal-rate sampling, we use the
annihilating filter throughout the paper.

\subsection{Obtaining The Fourier Series Coefficients}\label{SingleCh:subsec_general_single_channel_sampling}
As we have seen, given the vector of $M\geq2L$ Fourier series coefficients $\mathbf{x}$, we may use standard tools from
spectral analysis to determine the set $\{t_l,a_l\}_{l=1}^L$. In practice, however, the signal is sampled in the time domain, and therefore we do not have direct access to samples of $\mathbf{x}$. Our goal is to design a
single-channel sampling scheme which allows to determine $\mathbf{x}$ from time-domain samples.
In contrast to previous work \cite{Vetterli2002_Basic,VetterliMagazine2008} which focused on a low-pass sampling filter%\footnote{Ronen comment: Dragotti et al. do not deal with the periodic case, their method works for the finite and infinite cases only.}
, in this section we
derive a general condition on the sampling kernel allowing to obtain the vector $\mathbf{x}$. For the sake of clarity we confine ourselves to uniform sampling, the results extend in a straightforward manner to nonuniform sampling as well.

\RonenComment{
\begin{figure}[h]
\centering
\includegraphics[scale=0.8]{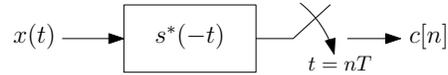}
\caption{Single channel sampling scheme.} \label{SingleCh:fig_basic_sampling}
\end{figure}}
Consider sampling the signal $x(t)$ uniformly with sampling kernel $s^*(-t)$ and sampling period $T$, as depicted in
Fig.\RonenComment{~\ref{SingleCh:fig_basic_sampling}}. The samples are given by
\begin{equation}\label{SingleCh:eq_c_n_basic_samples}
  c[n] = \int_{-\infty}^{\infty} x(t) s^*(t-nT) {\rm d}t = \langle s(t-nT),x(t) \rangle.
\end{equation}
Substituting \eqref{SingleCh:eq_x_t_Fourier_series_rep} into \eqref{SingleCh:eq_c_n_basic_samples} we have
\begin{eqnarray}
\nonumber c[n] &=& \sum_{k\in\mathbb{Z}}X[k] \int_{-\infty}^{\infty} e^{j2\pi k
t/\tau}s^*(t-nT)dt \\
\nonumber
&=& \sum_{k\in\mathbb{Z}} X[k] e^{j2\pi k n T/\tau} \int_{-\infty}^{\infty} e^{j2\pi kt/\tau}s^*(t)dt \\
&=& \sum_{k\in\mathbb{Z}} X[k] e^{j2\pi k n T/\tau} S^*(2\pi k/\tau), \label{SingleCh:eq_c_n}
\end{eqnarray}
where $S(\omega)$ is the CTFT of $s(t)$. Choosing any filter $s(t)$ which satisfies
\begin{equation}\label{SingleCh:eq_S_omega}
S(\omega) = \left\{
\begin{array}{l l}
  0 & \omega = 2\pi k/\tau, k \notin \mathcal{K} \\
  \mbox{nonzero} & \omega = 2\pi k/\tau, k \in \mathcal{K} \\
  \textrm{arbitrary} & \quad \mbox{otherwise},\\
\end{array} \right.
\end{equation}
we can rewrite \eqref{SingleCh:eq_c_n} as
\begin{equation}\label{SingleCh:eq_c_n_good_s_t}
c[n] = \sum_{k\in\mathcal{K}} X[k] e^{j2\pi k n T/\tau} S^*(2\pi k/\tau).
\end{equation}
In contrast to \eqref{SingleCh:eq_c_n}, the sum in \eqref{SingleCh:eq_c_n_good_s_t} is finite. Note that \eqref{SingleCh:eq_S_omega} implies that any
real filter meeting this condition will satisfy $k \in \Kset \Rightarrow -k \in \Kset$, and in addition $S(2\pi k/\tau) =
S^*(-2\pi k/\tau)$, due to the conjugate symmetry of real filters.

Defining the $M \times M$ diagonal matrix $\mathbf{S}$ whose \textit kth entry is $S^*(2\pi k/\tau)$ for all $k \in
\mathcal{K}$, and the length-$N$ vector $\mathbf{c}$ whose \textit nth element is $c[n]$, we may write \eqref{SingleCh:eq_c_n_good_s_t}
as
\begin{equation}\label{SingleCh:eq_c_n_matrix}
\mathbf{c} = \mathbf{V}(-\mathbf{t}_s) \mathbf{S} \mathbf{x}
\end{equation}
where $\mathbf{t}_s = \{nT: n=0 \ldots N-1\}$, and $\mathbf{V}$ is defined as in \eqref{SingleCh:eq_X_k_matrix} with a different parameter
$-\mathbf{t}_s$ and dimensions $N \times M$. The matrix $\mathbf{S}$ is invertible by construction. Since $\mathbf{V}$ is Vandermonde, it is left invertible as long as $N \geq M$. Therefore,
\begin{equation}\label{SingleCh:eq_obtaining_x_vector_from_samples}
  \mathbf{x} = \mathbf{S}^{-1} \mathbf{V}^{\dag}(-\mathbf{t}_s) \mathbf{c}.
\end{equation}
In the special case where $N=M$ and $T = \tau/N$, the recovery in \eqref{SingleCh:eq_obtaining_x_vector_from_samples} becomes:
\begin{equation}\label{SingleCh:eq_X_k_matrix_DFT}
  \mathbf{x} = \mathbf{S}^{-1} \textrm{DFT}\{\mathbf{c}\},
\end{equation}
i.e., the vector $\mathbf{x}$ is obtained by applying the Discrete Fourier Transform (DFT) on the sample vector, followed by a
correction matrix related to the sampling filter.

The idea behind this sampling scheme is that each sample is actually a linear combination of the elements of $\mathbf{x}$. The
sampling kernel $s(t)$ is designed to pass the coefficients $X[k],\, k \in \mathcal{K}$ while suppressing all other
coefficients $X[k],\, k \notin \mathcal{K}$. This is exactly what the condition in \eqref{SingleCh:eq_S_omega} means. This sampling
scheme guarantees that each sample combination is linearly independent of the others. Therefore, the linear system of
equations in \eqref{SingleCh:eq_c_n_matrix} has full column rank which allows to solve for the vector $\mathbf{x}$.

We summarize this result in the following theorem.
\begin{theorem}\label{SingleCh:prop_periodic_general_codition_filter}
  Consider the $\tau$-periodic stream of pulses of order $L$:
  \begin{equation*}
    x(t) = \sum_{m\in \mathbb{Z}} \sum_{l=1}^L a_l h(t - t_l - m\tau).
  \end{equation*}
  Choose a set
  $\mathcal{K}$ of consecutive indices for which $H(2\pi k/\tau) \neq 0,\, \forall k\in\mathcal{K}$. Then the samples
  \begin{equation*}
    c[n] = \langle s(t-nT),x(t) \rangle , \quad n=0\ldots N-1,
  \end{equation*}
  uniquely determine the signal $x(t)$ for any $s(t)$ satisfying condition \eqref{SingleCh:eq_S_omega}, as long as
  $N \geq |\mathcal{K}| \geq 2L$.
\end{theorem}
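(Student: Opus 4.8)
The plan is to establish uniqueness constructively, by exhibiting an explicit recovery map from the sample vector $\mathbf{c}$ back to the parameters $\{t_l,a_l\}_{l=1}^L$, and hence to $x(t)$. Since the signal in \eqref{SingleCh:eq_sig_model_periodic} is completely determined by these $2L$ parameters, producing any well-defined inverse map suffices to prove uniqueness. I would decompose the argument into a chain of two maps, each of which I argue is injective under the stated hypotheses: first the map $\{t_l,a_l\}\mapsto\mathbf{x}$ from the parameters to the $M=|\mathcal{K}|$ chosen Fourier coefficients, and then the map $\mathbf{x}\mapsto\mathbf{c}$ from Fourier coefficients to time-domain samples; uniqueness of $x(t)$ follows by inverting the composition.

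First I would reduce the sampling integral to a finite linear relation. Substituting the Fourier series \eqref{SingleCh:eq_x_t_Fourier_series_rep} into the inner product $c[n]=\langle s(t-nT),x(t)\rangle$ and interchanging summation and integration gives \eqref{SingleCh:eq_c_n}, namely $c[n]=\sum_{k\in\mathbb{Z}}X[k]e^{j2\pi k nT/\tau}S^*(2\pi k/\tau)$. Here the condition \eqref{SingleCh:eq_S_omega} on $S(\omega)$ does the key work: it annihilates every term with $k\notin\mathcal{K}$ while keeping the $M$ terms with $k\in\mathcal{K}$ nonzero, collapsing the infinite sum to the finite form \eqref{SingleCh:eq_c_n_good_s_t} and yielding the matrix identity $\mathbf{c}=\mathbf{V}(-\mathbf{t}_s)\mathbf{S}\mathbf{x}$ of \eqref{SingleCh:eq_c_n_matrix}. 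The diagonal matrix $\mathbf{S}$ is invertible precisely because \eqref{SingleCh:eq_S_omega} forces its entries $S^*(2\pi k/\tau)$, $k\in\mathcal{K}$, to be nonzero, and $\mathbf{V}(-\mathbf{t}_s)$ is an $N\times M$ Vandermonde matrix whose nodes $e^{j2\pi nT/\tau}$ are distinct, so it has full column rank $M$ whenever $N\geq M$. Left-inverting as in \eqref{SingleCh:eq_obtaining_x_vector_from_samples} then recovers $\mathbf{x}$ uniquely from $\mathbf{c}$.

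It remains to recover the parameters from $\mathbf{x}$. Setting $\mathbf{y}=\mathbf{H}^{-1}\mathbf{x}$, which is well defined because $H(2\pi k/\tau)\neq0$ on $\mathcal{K}$, produces $\mathbf{y}_k=\sum_{l=1}^L a_l e^{-j2\pi k t_l/\tau}$ as in \eqref{SingleCh:eq_y_equal_x_deconvolved}: a sum of $L$ complex exponentials sampled on the $M$ consecutive indices of $\mathcal{K}$. I would invoke the standard spectral-estimation result already cited (the annihilating-filter method), which recovers the $L$ distinct frequencies $e^{-j2\pi t_l/\tau}$, and hence the delays $t_l\in[0,\tau)$, from $M\geq 2L$ consecutive values; the amplitudes $a_l$ then follow by solving the resulting $L\times L$ Vandermonde system, which is nonsingular because the delays are distinct. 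Chaining the two injective maps shows that $\mathbf{c}$ determines $\{t_l,a_l\}$, and thus $x(t)$, uniquely, which is the assertion.

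The step I expect to carry the real content is the rank bookkeeping that makes both Vandermonde matrices behave, together with the solvability threshold of the spectral step. The two counting constraints enter at different places: $N\geq M$ is exactly what guarantees full column rank of $\mathbf{V}(-\mathbf{t}_s)$ in the sampling inversion, whereas $M\geq 2L$ is precisely the number of equations the annihilating filter needs to resolve $L$ exponentials, so the hypothesis $N\geq|\mathcal{K}|\geq 2L$ packages both. I would also make explicit the node-distinctness assumptions that actually secure invertibility rather than the dimension counts alone: distinct delays $t_i\neq t_j$ for $\mathbf{V}(\mathbf{t})$, and distinct sampling phases $nT/\tau \bmod 1$ for $\mathbf{V}(-\mathbf{t}_s)$ (automatic in the special case $N=M$, $T=\tau/N$, where the nodes are the $N$th roots of unity).
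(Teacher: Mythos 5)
Your proposal is correct and follows essentially the same route as the paper: the paper's ``proof'' of Theorem~\ref{SingleCh:prop_periodic_general_codition_filter} is precisely the derivation of \eqref{SingleCh:eq_c_n}--\eqref{SingleCh:eq_obtaining_x_vector_from_samples} (condition \eqref{SingleCh:eq_S_omega} collapsing the infinite sum, invertibility of $\mathbf{S}$, left-invertibility of the Vandermonde matrix for $N \geq M$) followed by the deconvolution $\mathbf{y}=\mathbf{H}^{-1}\mathbf{x}$ and the annihilating-filter recovery requiring $M \geq 2L$. Your added remark about making the node-distinctness conditions (distinct $t_l$, distinct sampling phases $nT/\tau \bmod 1$) explicit is a small sharpening of the same argument, not a different one.
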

In order to extend Theorem~\ref{SingleCh:prop_periodic_general_codition_filter} to nonuniform sampling, we only need to
substitute the nonuniform sampling times in the vector $\mathbf{t}_s$ in \eqref{SingleCh:eq_obtaining_x_vector_from_samples}.

Theorem~\ref{SingleCh:prop_periodic_general_codition_filter} presents a general single channel sampling scheme. One special
case of this framework is the one proposed by Vetterli et al. in~\cite{Vetterli2002_Basic} in which $s^*(-t) =
B\sinc(-Bt)$, where $B = M/\tau$ and $N \geq M \geq 2L$. In this case $s(t)$ is an ideal
low-pass filter of bandwidth $B$ with
\begin{equation}\label{SingleCh:eq_Vetterli_lowpass}
  S(\omega) = \frac{1}{\sqrt{2\pi}}\rect\left(\frac{\omega}{2\pi B}\right).
\end{equation}
Clearly, \eqref{SingleCh:eq_Vetterli_lowpass} satisfies the general condition in \eqref{SingleCh:eq_S_omega} with $\mathcal{K}=\{-\lfloor
M/2 \rfloor,\ldots,\lfloor M/2 \rfloor\}$ and $S\left(\frac{2\pi k}{\tau}\right) = \frac{1}{\sqrt{2\pi}},\, \forall k \in \mathcal{K}$. Note
that since this filter is real valued it must satisfy $k\in \Kset \Rightarrow -k\in\Kset$, i.e., the indices come in pairs
except for $k=0$. Since $k=0$ is part of the set $\Kset$, in this case the cardinality $M=|\Kset|$ must be odd valued so that $N \geq M \geq 2L+1$ samples, rather than the minimal rate $N \geq 2L$.

The ideal low-pass filter is bandlimited, and therefore has infinite time-support, so that it cannot be extended to finite and infinite streams of pulses. In the next section we propose a class of
non-bandlimited sampling kernels, which exploit the additional degrees of freedom in condition \eqref{SingleCh:eq_S_omega}, and
have compact support in the time domain. The compact support allows to extend this class to finite and infinite
streams, as we show in Sections \ref{SingleCh:sec_aperiodicFRI} and \ref{SingleCh:sec_infiniteFRI}, respectively.

\subsection{Compactly Supported Sampling Kernels}
\RonenComment{
\begin{figure*}
\centering \mbox { \subfigure{\includegraphics[scale=0.7]{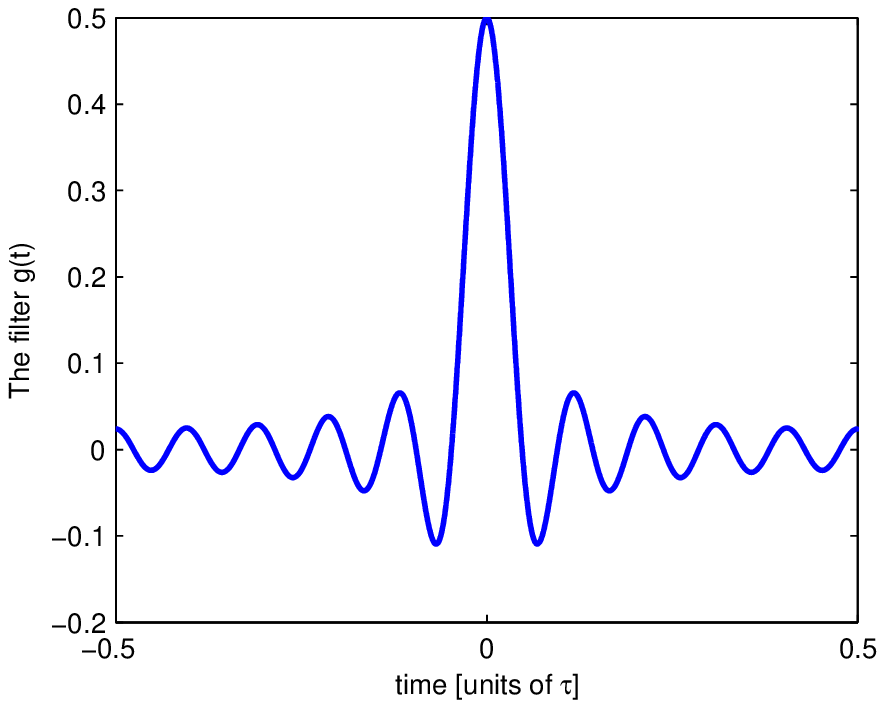}}
\subfigure{\includegraphics[scale=0.7]{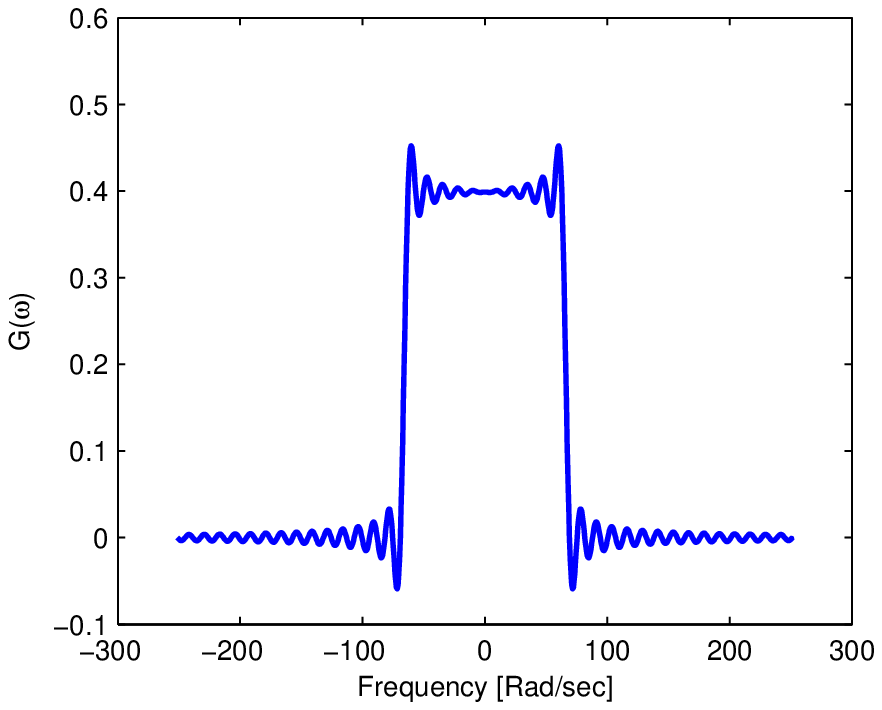}} }
\caption{The filter $g(t)$ with all coefficients $b_k=1$.} \label{SingleCh:fig_all_ones_filter}
\end{figure*}}
\RonenComment{
\begin{figure*}
\centering \mbox { \subfigure{\includegraphics[scale=0.7]{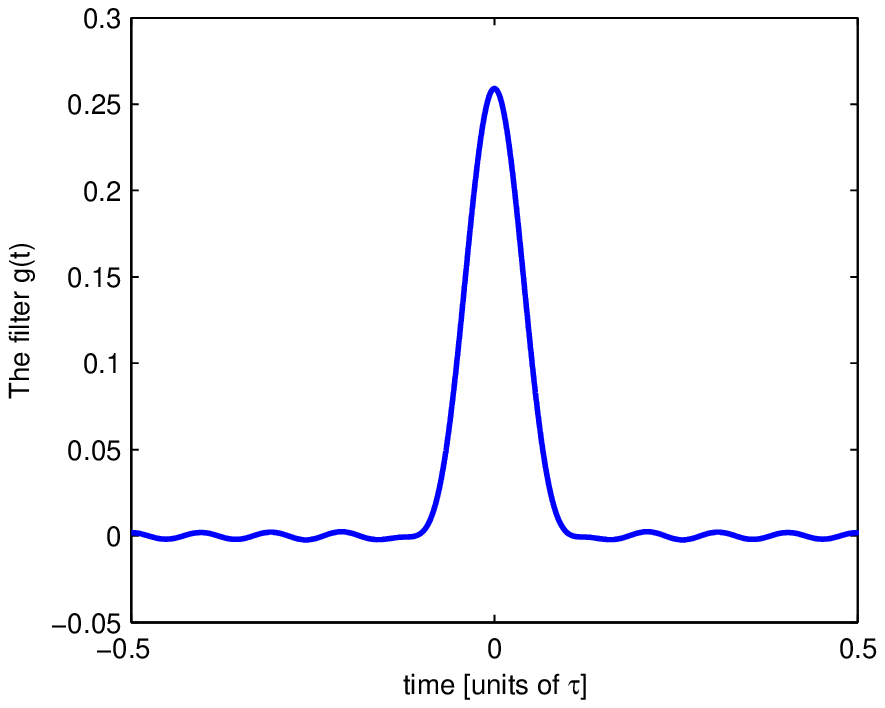}} \subfigure{\includegraphics[scale=0.7]{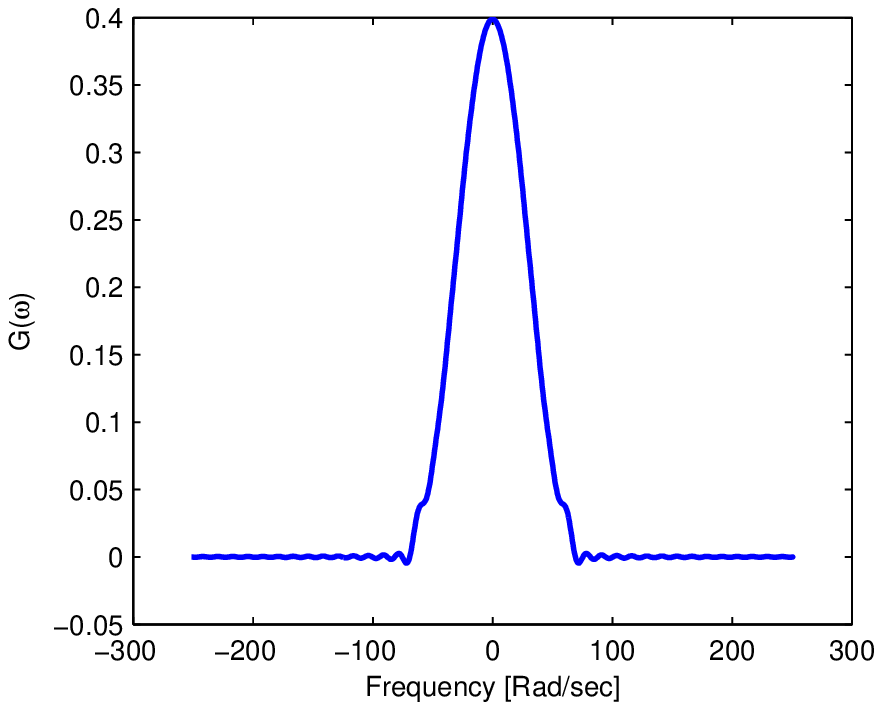}} } \caption{The filter
$g(t)$ with Hamming window coefficients.} \label{SingleCh:fig_hamming_filter}
\end{figure*}}
Consider the following SoS class which consists of a sum of sincs in the frequency domain:
\begin{equation}\label{SingleCh:eq_G_omega}
  G(\omega) = \frac{\tau}{\sqrt{2\pi}} \sum_{k \in \mathcal{K}} b_k \sinc\left(\frac{\omega}{2\pi/\tau} - k\right)
\end{equation}
where $b_k \neq 0,\, k \in \mathcal{K}$. The filter in \eqref{SingleCh:eq_G_omega} is real valued if and only if $k\in\Kset
\Rightarrow -k \in\Kset$ and $b_k = b^*_{-k}$ for all $k\in\Kset$. Since for each sinc in the sum
\begin{equation}
  \sinc\left(\frac{\omega}{2\pi/\tau} - k\right) = \left\{
\begin{array}{l l}
  1 & \omega = 2\pi k'/\tau,\, k'=k \\
  0 & \omega = 2\pi k'/\tau,\, k' \neq k, \\
%  \textrm{arbitrary} & \quad \mbox{otherwise},\\
\end{array} \right.
\end{equation}
the filter $G(\omega)$ satisfies \eqref{SingleCh:eq_S_omega} by construction. Switching to the time domain
\begin{equation}\label{SingleCh:eq_g_t}
  g(t) = \rect\left(\frac{t}{\tau}\right) \sum_{k \in \mathcal{K}} b_k e^{j2\pi kt/\tau},
\end{equation}
which is clearly a time compact filter with support $\tau$.

The SoS class in \eqref{SingleCh:eq_g_t} may be extended to
\begin{equation}\label{SingleCh:eq_G_omega_extended}
  G(\omega) = \frac{\tau}{\sqrt{2\pi}} \sum_{k \in \mathcal{K}} b_k \phi\left(\frac{\omega}{2\pi/\tau} - k\right)
\end{equation}
where $b_k \neq 0,\, k \in \mathcal{K}$, and $\phi(\omega)$ is any function satisfying:
\begin{equation}
  \phi\left(\omega \right) = \left\{
\begin{array}{l l}
  1 & \omega = 0 \\
  0 & |\omega| \in \mathbb{N} \\
  \textrm{arbitrary} & \mbox{otherwise}. \\
%  \textrm{arbitrary} & \quad \mbox{otherwise},\\
\end{array} \right.
\end{equation}
This more general structure allows for smooth versions of the rect function, which is important when practically implementing analog filters.

\RonenComment{
\begin{figure*}
\centering \mbox { \subfigure[Periodic signal]{\includegraphics[scale=0.7]{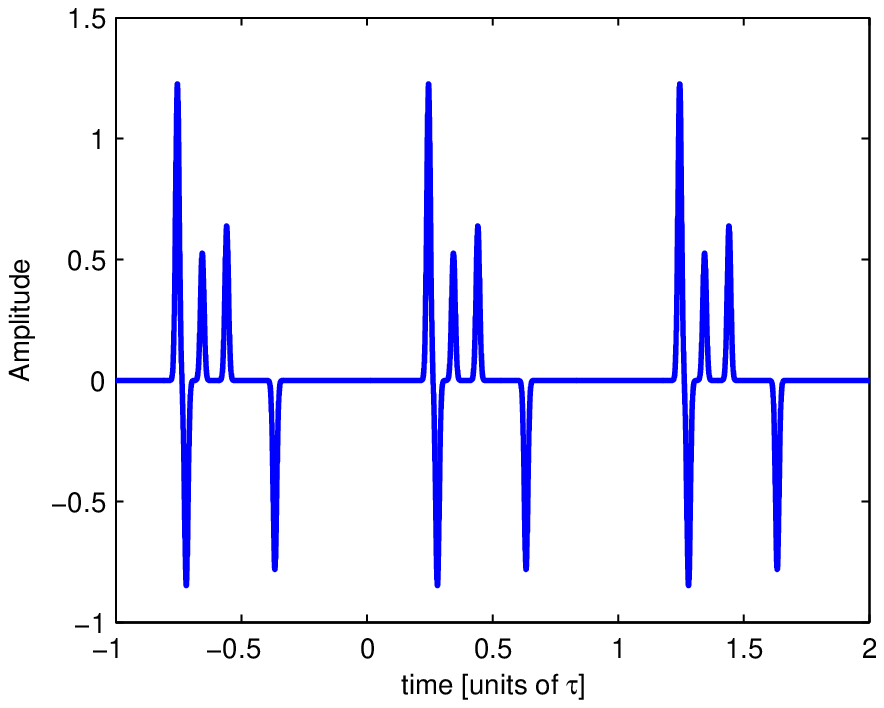}}
} \\
\centering \mbox { \subfigure[Sampling
filter]{\includegraphics[scale=0.7]{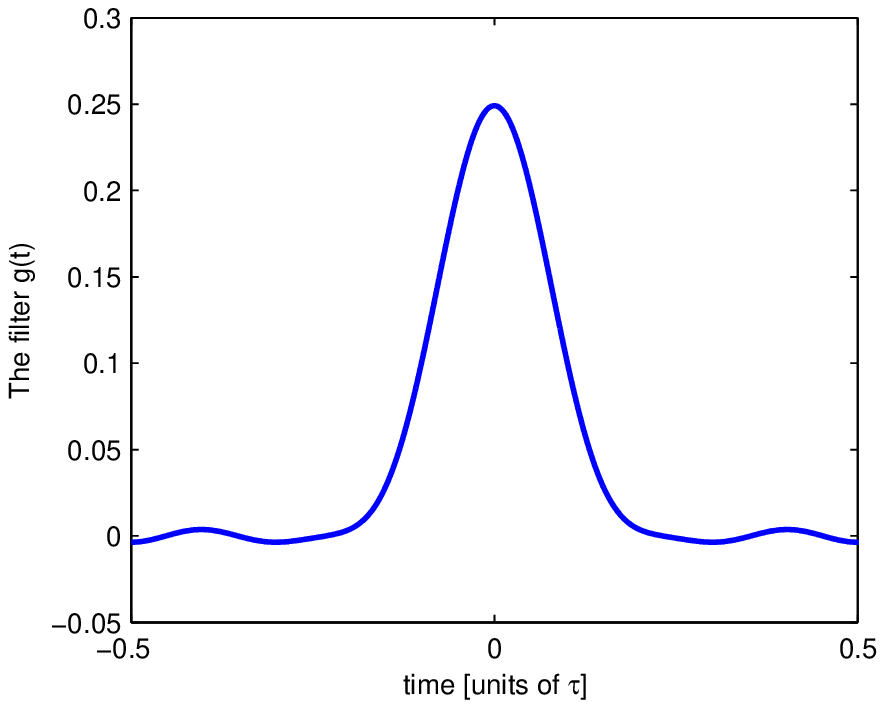}}
\subfigure[Low rate
samples]{\includegraphics[scale=0.7]{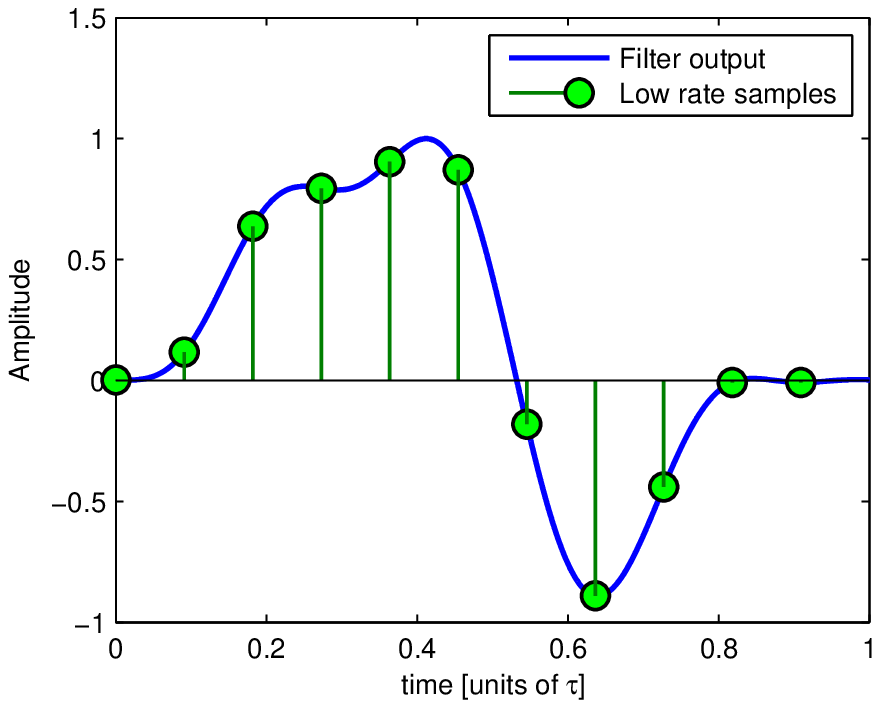}}}
\caption{Compressed samples of pulse streams (a) Original
periodic signal consisting of 5 Gaussians (3 periods
are shown). (b) Sampling filter. (c) Low rate samples depicted
over the filtered signal.}
\label{SingleCh:fig_periodic_demonstration_sampling}
\end{figure*}}
The function $g(t)$ represents a class of filters determined by the parameters $\{b_k\}_{k\in\mathcal K}$. These degrees
of freedom offer a filter design tool where the free parameters $\{b_k\}_{k\in\mathcal K}$ may be optimized
for different goals, e.g., parameters which will result in a feasible analog filter. In Theorem~\ref{SingleCh:theorem_optimal_b_k} below, we show how to choose $\{b_k\}$ to minimize the mean-squared error (MSE) in the presence of noise.

Determining the parameters $\{b_k\}_{k\in\mathcal K}$ may be viewed from a more empirical point of view. The impulse response of any analog filter having support $\tau$ may be
written in terms of a windowed Fourier series as
\begin{equation}
  \Phi(t) = \rect\left(\frac{t}{\tau}\right) \sum_{k \in \mathbb{Z}} \beta_k e^{j2\pi kt/\tau}.
\end{equation}
\RonenComment{
\begin{figure*}
\centering \mbox {
\subfigure[]{\includegraphics[scale=0.8]{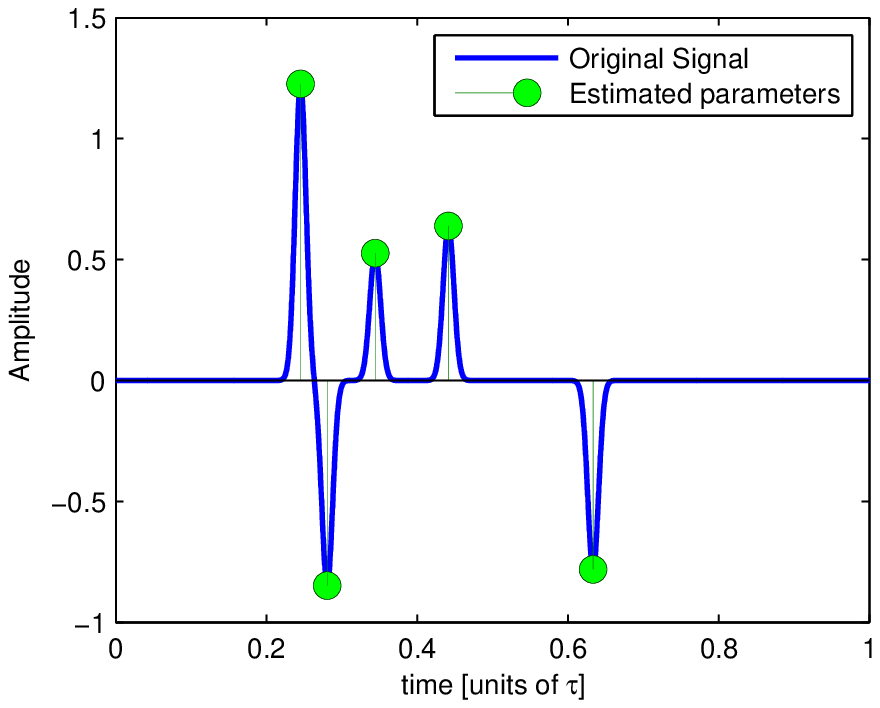}}
\subfigure[]{\includegraphics[scale=0.8]{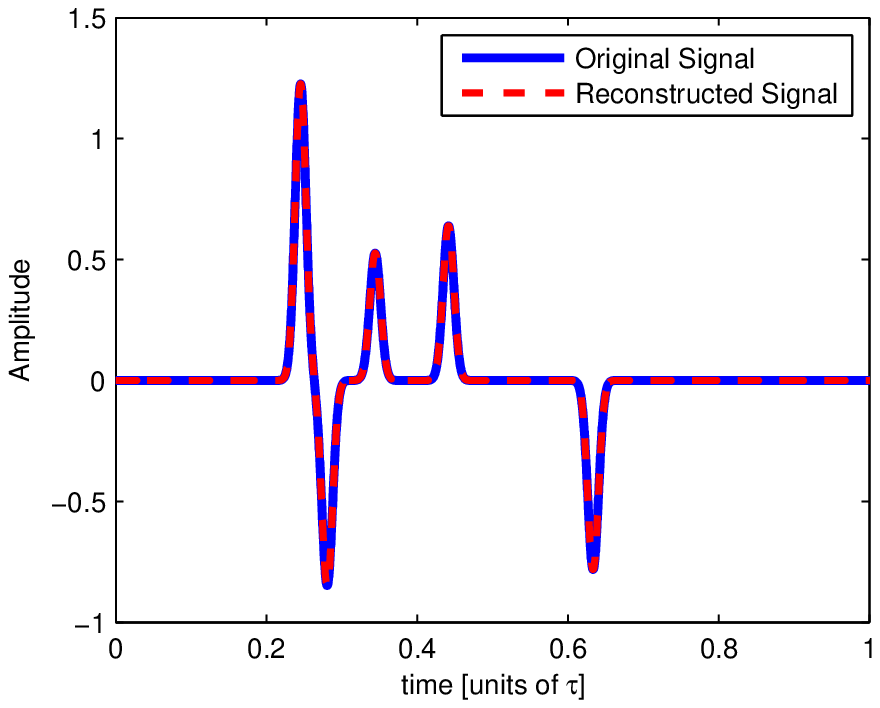}}}
\caption{(a) Estimated time-delays and amplitudes depicted over the original signal. (b) Reconstructed signal vs. original
one. The reconstruction is exact to numerical precision.} \label{SingleCh:fig_periodic_demonstration_reconstruction}
\end{figure*}}Confining ourselves to filters which satisfy $\beta_k \neq 0,\, k\in\mathcal{K}$, we may truncate the series and choose:
\begin{equation}
  b_k = \left\{
\begin{array}{l l}
  \beta_k & k \in \mathcal{K} \\
  0 & k \notin \mathcal{K} \\
\end{array} \right.
\end{equation}
as the parameters of $g(t)$ in \eqref{SingleCh:eq_g_t}. With this choice, $g(t)$ can be viewed as an approximation to $\Phi(t)$. Notice that there is an inherent tradeoff here: using more coefficients will result in a better approximation
of the analog filter, but in turn will require more samples, since the number of samples $N$ must be greater than the cardinality of the set $\mathcal{K}$.
% Created using my_idea_script.m with plot_flag on. On 14/10/09
% K=10, tau=1, spair=1, b_k either all ones or hamming

To demonstrate the filter $g(t)$ we first choose $\mathcal{K} =
\{-p,\ldots,p\}$ and set all coefficients $\{b_k\}$ to one,
resulting in
\begin{equation}\label{SingleCh:eq_Dirichlet_filter}
  g(t) = \rect\left(\frac{t}{\tau}\right) \sum_{k=-p}^p e^{j2\pi kt/\tau} = \rect\left(\frac{t}{\tau}\right) D_p(2\pi
  t/\tau),
\end{equation}
where the Dirichlet kernel $D_p(t)$ is defined by
\begin{equation}
  D_p(t) = \sum_{k=-p}^p e^{jkt} = \frac{\sin\left(\left(p +\frac{1}{2}\right)t\right)}{\sin(t/2)}.
\end{equation}
The resulting filter for $p=10$ and $\tau = 1 \textrm{ sec}$, is depicted in
Fig.\RonenComment{~\ref{SingleCh:fig_all_ones_filter}}. This filter is also optimal in an MSE sense for the case $h(t)=\delta(t)$, as we show in Theorem~\ref{SingleCh:theorem_optimal_b_k}. In Fig.\RonenComment{~\ref{SingleCh:fig_hamming_filter}} we plot $g(t)$ for the case in which the $b_k$'s are chosen as a
length-$M$ symmetric Hamming window:
\begin{equation}\label{SingleCh:eq_hamming_window}
  b_k = 0.54 - 0.46\cos\left(2\pi\frac{k+\lfloor M/2 \rfloor}{M}\right),\quad k\in\mathcal{K}.
\end{equation}
Notice that in both cases the coefficients satisfy $b_k = b^*_{-k}$, and therefore, the resulting filters
are real valued.

In the presence of noise, the choice of $\{b_k\}_{k \in \mathcal{K}}$ will effect the performance. Consider the case in which digital noise is added to the samples $\mathbf{c}$, so that $\mathbf{y}=\mathbf{c}+\mathbf{w}$, with $\mathbf{w}$ denoting a white Gaussian noise vector. Using \eqref{SingleCh:eq_c_n_matrix},
\begin{equation}\label{SingleCh:eq_c_n_matrix_noisy}
\mathbf{y} = \mathbf{V}(-\mathbf{t}_s) \mathbf{B} \mathbf{x} + \mathbf{w}
\end{equation}
where $\mathbf{B}$ is a diagonal matrix, having $\{b_k\}$ on its diagonal. To choose the optimal $\mathbf{B}$ we assume that the $\{a_l\}$ are uncorrelated with variance $\sigma_a^2$, independent of $\{t_l\}$, and that $\{t_l\}$ are uniformly distributed in $[0,\tau)$. Since the noise is added to the samples after filtering, increasing the filter's amplification will always reduce the MSE. Therefore, the filter's energy must be normalized, and we do so by adding the constraint $\textrm{Tr}(\mathbf{B}^*\mathbf{B}) = 1$. Under these assumptions, we have the following theorem:
\begin{theorem}\label{SingleCh:theorem_optimal_b_k}
  The minimal MSE of a linear estimator of $\mathbf{x}$ from the noisy samples $\mathbf{y}$ in \eqref{SingleCh:eq_c_n_matrix_noisy} is achieved by choosing the coefficients
  \begin{equation}\label{SingleCh:eq_theorem_optimal_b_k}
      |b_i|^2 = \left\{
      \begin{array}{l l}
        \frac{\sigma^2}{N}\left(\sqrt{\frac{N}{\lambda \sigma^2}}-\frac{1}{|\tilde h_i|^2}\right) & \lambda \leq |\tilde h_i|^4 N/\sigma^2 \\
        0 & \lambda > |\tilde h_i|^4 N/\sigma^2\\
      \end{array} \right.
  \end{equation}
  where $\tilde h_k = H(2\pi k/\tau) \sigma_a \sqrt{L} / \tau$ and are arranged in an increasing order of $|\tilde h_k|$,
  \begin{equation}
    \sqrt{\lambda} = \frac{(|\mathcal{K}|-m)\sqrt{N/\sigma^2}}{N/\sigma^2 + \displaystyle\sum_{i=m+1}^{|\mathcal{K}|}1/|\tilde h_i|^2},
  \end{equation}
  and $m$ is the smallest index for which $\lambda \leq |\tilde h_{m+1}|^4 N/\sigma^2$.
\end{theorem}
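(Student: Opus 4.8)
The plan is to treat this as a Bayesian linear (LMMSE) estimation problem and to exploit two diagonalizations so that the matrix MSE collapses to a separable scalar sum, at which point the optimal $\{b_k\}$ follow from a standard water-filling argument.

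First I would compute the covariance matrix $\mathbf{R}_x$ of the random vector $\mathbf{x}$. Substituting the expression \eqref{SingleCh:eq_X_k} for $X[k]$ and using that the $\{a_l\}$ are uncorrelated with variance $\sigma_a^2$ and independent of the $\{t_l\}$, the amplitude cross-moments $\textrm{Cov}(a_l,a_{l'}) = \sigma_a^2\delta_{ll'}$ collapse the double sum over $l,l'$, while the uniform distribution of $t_l$ on $[0,\tau)$ yields $E[e^{-j2\pi(k-k')t_l/\tau}] = \delta_{kk'}$. Hence $\mathbf{R}_x$ is diagonal with $k$th entry $\frac{\sigma_a^2 L}{\tau^2}|H(2\pi k/\tau)|^2 = |\tilde h_k|^2$. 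This diagonal structure is the crux of the whole reduction.

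Next I would invoke the standard expression for the minimal MSE over linear estimators of $\mathbf{x}$ from $\mathbf{y} = \mathbf{A}\mathbf{x} + \mathbf{w}$ with $\mathbf{A} = \mathbf{V}(-\mathbf{t}_s)\mathbf{B}$ and $\mathbf{R}_w = \sigma^2\mathbf{I}$, namely $\textrm{MSE} = \textrm{Tr}\big[(\mathbf{R}_x^{-1} + \mathbf{A}^H\mathbf{R}_w^{-1}\mathbf{A})^{-1}\big]$, attained by the LMMSE estimator. For the critically sampled uniform grid $t_s = nT$ with $T = \tau/N$, the orthogonality of the DFT exponentials gives $\mathbf{V}^H(-\mathbf{t}_s)\mathbf{V}(-\mathbf{t}_s) = N\mathbf{I}$, so that, since $\mathbf{B}$ is diagonal, $\mathbf{A}^H\mathbf{R}_w^{-1}\mathbf{A} = (N/\sigma^2)\,\diag(|b_k|^2)$. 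Together with the diagonal $\mathbf{R}_x$ the trace decouples into $\textrm{MSE} = \sum_{k\in\mathcal{K}}\big(1/|\tilde h_k|^2 + N|b_k|^2/\sigma^2\big)^{-1}$.

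Finally I would minimize this convex, separable objective over $\{|b_k|^2\ge 0\}$ subject to the power constraint $\textrm{Tr}(\mathbf{B}^*\mathbf{B}) = \sum_k |b_k|^2 = 1$. The KKT stationarity condition for an active coordinate, $\frac{N/\sigma^2}{(1/|\tilde h_k|^2 + N|b_k|^2/\sigma^2)^2} = \lambda$, rearranges to $|b_k|^2 = \frac{\sigma^2}{N}\big(\sqrt{N/(\lambda\sigma^2)} - 1/|\tilde h_k|^2\big)$, while coordinates for which this is negative are clipped to zero; the boundary is $\lambda = |\tilde h_k|^4 N/\sigma^2$, giving exactly \eqref{SingleCh:eq_theorem_optimal_b_k}. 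Ordering the $|\tilde h_k|$ increasingly, the smallest $m$ are switched off, and substituting the active terms into $\sum_k|b_k|^2 = 1$ gives a single linear equation for $\sqrt{N/(\lambda\sigma^2)}$ whose solution is the stated formula for $\sqrt{\lambda}$; consistency of the active set determines $m$ as the smallest index with $\lambda \le |\tilde h_{m+1}|^4 N/\sigma^2$.

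The main obstacle is not the water-filling step, which is routine once the objective is separable, but securing that separability. I expect the two diagonalizations — the diagonal signal covariance from the uncorrelated-amplitude and uniform-delay assumptions, and $\mathbf{V}^H\mathbf{V} = N\mathbf{I}$ from the uniform DFT grid — to carry the argument, since without both the MSE trace does not split across $k$ and no per-coefficient closed form survives. I would also be careful to justify that, for each fixed $\mathbf{B}$, the LMMSE estimator attains the minimal MSE among all linear estimators, which legitimizes the outer minimization over $\mathbf{B}$.
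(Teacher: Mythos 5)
Your proposal is correct and takes essentially the same route as the paper: the diagonal $\mathbf{R}_{xx}$ with entries $|\tilde h_k|^2$ obtained from the uncorrelated-amplitude and uniform-delay assumptions, the identity $\mathbf{V}^H\mathbf{V}=N\mathbf{I}$ on the uniform grid, reduction to the separable objective $\sum_k \bigl(1/|\tilde h_k|^2 + N|b_k|^2/\sigma^2\bigr)^{-1}$, and the KKT water-filling step yielding \eqref{SingleCh:eq_theorem_optimal_b_k} and the stated equation for $\sqrt{\lambda}$. The only cosmetic difference is that you start from the information form $\textrm{Tr}\bigl[(\mathbf{R}_x^{-1}+\mathbf{A}^H\mathbf{R}_w^{-1}\mathbf{A})^{-1}\bigr]$ of the minimal linear-estimator MSE, whereas the paper writes $\textrm{Tr}\{\mathbf{R}_{xx}\}-\textrm{Tr}\{\mathbf{R}_{xy}\mathbf{R}_{yy}^{-1}\mathbf{R}_{yx}\}$ and applies the matrix inversion lemma explicitly; the two are algebraically identical (using $\tilde h_k\neq 0$ so $\mathbf{R}_{xx}$ is invertible).
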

\begin{proof}
  See the Appendix.
\end{proof}
An important consequence of Theorem 2 is the following corollary.
\begin{corollary}\label{SingleCh:theorem_optimal_b_k_delta}
  If $|\tilde h_k|^2=|\tilde h_\ell|^2,\,\forall k,\ell\in\mathcal{K}$ then the optimal coefficients are $|b_i|^2 = 1/|\mathcal{K}|,\,\forall k\in\mathcal{K}$.
\end{corollary}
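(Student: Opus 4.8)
The plan is to specialize the water-filling solution of Theorem~\ref{SingleCh:theorem_optimal_b_k} to the case where every $|\tilde h_k|^2$ takes a common value, which I will denote $\eta \defn |\tilde h_k|^2$. The essential observation is that when all the magnitudes coincide there is no ``weak'' coefficient to switch off, so the truncation threshold never becomes active; once this is established, the stated value follows by direct substitution and cancellation.

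First I would determine the integer $m$. Because all $|\tilde h_k|$ are equal, the threshold $|\tilde h_{m+1}|^4 N/\sigma^2 = \eta^2 N/\sigma^2$ appearing in the definition of $m$ is the same for every index. Hence the defining inequality $\lambda \leq |\tilde h_{m+1}|^4 N/\sigma^2$ either holds for all admissible $m$ or for none, and its smallest solution---if it exists---is $m=0$. I would therefore compute $\lambda$ under the hypothesis $m=0$, for which the formula in the theorem reads
\begin{equation*}
  \sqrt{\lambda} = \frac{|\mathcal{K}|\sqrt{N/\sigma^2}}{N/\sigma^2 + |\mathcal{K}|/\eta},
\end{equation*}
and then verify that this $\lambda$ indeed satisfies $\lambda \leq \eta^2 N/\sigma^2$. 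Squaring and simplifying, the latter inequality collapses to $0 \leq \eta N/\sigma^2$, which is always true; this confirms \emph{a posteriori} that $m=0$ is self-consistent and that none of the coefficients are set to zero.

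With $m=0$ established, the second step is pure substitution. Using the above expression for $\sqrt{\lambda}$ one finds $\sqrt{N/(\lambda\sigma^2)} = (N/\sigma^2 + |\mathcal{K}|/\eta)/|\mathcal{K}|$, and plugging this into the active branch of \eqref{SingleCh:eq_theorem_optimal_b_k} gives
\begin{equation*}
  |b_i|^2 = \frac{\sigma^2}{N}\left(\frac{N/\sigma^2}{|\mathcal{K}|} + \frac{1}{\eta} - \frac{1}{\eta}\right) = \frac{1}{|\mathcal{K}|}.
\end{equation*}
The two $1/\eta$ terms---one from the water level and one from the subtracted $1/|\tilde h_i|^2$---cancel, leaving a value independent of $i$, as claimed. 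As a consistency check, the answer matches the symmetry heuristic: the optimization is invariant under permutations of the indices in $\mathcal{K}$ when the $|\tilde h_k|$ agree, so any unique optimizer must be constant across $\mathcal{K}$, and the energy constraint $\mathrm{Tr}(\mathbf{B}^*\mathbf{B}) = \sum_{k}|b_k|^2 = 1$ then forces each $|b_k|^2 = 1/|\mathcal{K}|$.

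I expect the only delicate point to be the determination of $m$: one must argue that the water-filling cutoff is inactive rather than simply assuming it. This is handled by the self-consistency check above, namely verifying that the candidate $\lambda$ obtained from $m=0$ respects the threshold inequality. The remaining algebra is routine cancellation.
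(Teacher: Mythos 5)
Your proposal is correct, but it reaches the conclusion by a different route than the paper. The paper's proof never touches $\lambda$ or $m$ explicitly: it observes from \eqref{SingleCh:eq_theorem_optimal_b_k} that equal $|\tilde h_k|$ forces equal $|b_k|$ (all indices land on the same branch of the water-filling formula), notes that the trace constraint $\mathrm{Tr}(\mathbf{B}^*\mathbf{B})=1$ rules out the all-zero branch, and then concludes $|b_i|^2 = 1/|\mathcal{K}|$ purely from the normalization --- essentially the symmetry argument you relegate to a ``consistency check'' at the end. You instead solve the water-filling fixed point explicitly: you posit $m=0$, compute $\sqrt{\lambda} = |\mathcal{K}|\sqrt{N/\sigma^2}\big/\bigl(N/\sigma^2 + |\mathcal{K}|/\eta\bigr)$, verify \emph{a posteriori} that this $\lambda$ respects the threshold (the inequality indeed collapses to $0 \leq \eta N/\sigma^2$, so the cutoff is inactive and no coefficient is switched off), and then cancel the two $1/\eta$ terms to get $1/|\mathcal{K}|$. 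Both arguments are sound; yours buys an explicit expression for the water level and a concrete demonstration that the truncation never activates, at the cost of more algebra, while the paper's two-line argument trades that explicitness for brevity by letting the structure of \eqref{SingleCh:eq_theorem_optimal_b_k} and the constraint do the work. One small remark: your claim that ``the inequality either holds for all admissible $m$ or for none'' is true only for a fixed $\lambda$ (since $\lambda$ itself depends on $m$), but your self-consistency verification, combined with the uniqueness of the water level established in the appendix, closes that loop correctly.
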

\begin{proof}
  It is evident from \eqref{SingleCh:eq_theorem_optimal_b_k} that if $|\tilde h_k|=|\tilde h_\ell|$ then $|b_k| = |b_{\ell}|$. To satisfy the trace constraint $\textrm{Tr}(\mathbf{B}^*\mathbf{B}) = 1$, $\lambda$ cannot be chosen such that all $b_i = 0$. Therefore, $|b_i|^2 = 1/|\mathcal{K}|$ for all $i\in\mathcal{K}$.
%  See Appendix~\ref{SingleCh:appendix_optimal_b_k_delta}.
\end{proof}
From Corollary~\ref{SingleCh:theorem_optimal_b_k_delta} it follows that when
$h(t) = \delta(t)$, the optimal choice of coefficients is $b_k = b_j$ for all $k$ and $j$. We therefore use this choice when simulating noisy settings in the next section.

Our sampling scheme for the periodic case consists of sampling
kernels having compact support in the time domain. In the next section we exploit the compact support of our filter, and
extend the results to the finite stream case. We will show that our sampling and reconstruction scheme offers a
numerically stable solution, with high noise robustness.

\subsection{Simulations}
\subsubsection{Demonstration of Our Sampling Scheme}
% This simulation was created using the script my_idea_script.m on the 14/10/09
% I saved the exact script which I executed: "for_paper_my_idea_script_PeriodicDemo141009.m"
% tau=1, K=5, the tk and xk (amplitudes) were saved in the file "5G_random_for_paper_periodic_141009"
% sigma = 1.5e-6/tau_factor which equals 7.2e-3.
% spair = 1, so M=2K+1=11
% b_k were taken as a hamming window
To demonstrate our results, we consider an input $x(t)$ consisting of $L=5$ delayed and weighted
versions of a Gaussian pulse
\begin{equation}\label{SingleCh:eq_gaussian_pulse}
  h(t) = \frac{1}{\sqrt{2\pi\sigma^2}} \exp(-t^2/2\sigma^2),
\end{equation}
with parameter $\sigma = 7\cdot 10^{-3}$, and period $\tau=1$. The time-delays and amplitudes were chosen randomly.
% Actually the sigma was 7.22
In order to demonstrate near-critical sampling we choose the set of indices $\mathcal{K}=\{-L,\ldots,L\}$ with cardinality
$M=|\mathcal{K}|=11$. We filter $x(t)$ with $g(t)$ of \eqref{SingleCh:eq_hamming_window}. The filter output is sampled uniformly $N$ times, with sampling period $T = \tau/N$, where $N=M=11$. The sampling process is
depicted in Fig.\RonenComment{~\ref{SingleCh:fig_periodic_demonstration_sampling}}. The vector $\mathbf{x}$ is obtained using \eqref{SingleCh:eq_obtaining_x_vector_from_samples}, and the delays and amplitudes are determined by the annihilating filter method. Reconstruction results are depicted in
Fig.\RonenComment{~\ref{SingleCh:fig_periodic_demonstration_reconstruction}}. The estimation and reconstruction are both exact to numerical precision.

Analog filtering operations are carried out by discrete approximations over a fine grid. The analog signal and filters are mimicked by high rate digital signals. Since the sampling rate which constructs the fine grid is between 2-3 orders of magnitude higher than the final sampling rate $T$, the simulations reflect very well the analog results.
\subsubsection{Noisy Case}\label{SingleCh:subsubsec_periodic_noisy}
% This simulation created by the script: "for_paper_deltas_script_ver2_5_PeriodicCaseNoisy141009.m" on the
% 14/10/09
We now consider the case in which the samples are corrupted by noise. Our signal consists of $L=2$ pulses $h(t) = \delta(t)$. The period was set to $\tau
= 1$, $\mathcal{K} = \{-2,\ldots,2\}$, and $N=M=5$ samples were taken, sampled uniformly with sampling
period $T=\tau/N$. We choose $g(t)$ given by \eqref{SingleCh:eq_Dirichlet_filter}. As explained earlier, only the values of the filter at points $2\pi k/\tau,\,k\in\mathcal{K}$ affect the samples (see \eqref{SingleCh:eq_S_omega}).
Since the values of the filter at the relevant points coincide and are equal to one for the low-pass filter \cite{Vetterli2002_Basic} and $g^*(-t)$, the resulting samples for both settings are identical. Therefore, we present results for our method only, and state that the exact same results are obtained using the approach of \cite{Vetterli2002_Basic}.

In our setup white Gaussian noise (AWGN) with variance $\sigma_n^2$ is added to the samples, where we define the SNR as:
\begin{equation}\label{SingleCh:eq_SNR_definition}
  \mathrm{SNR} = \frac{\frac{1}{N}\|\mathbf{c}\|_2^2}{\sigma_n^2},
%  \mathrm{SNR} = \frac{(\mathrm{std}(\mathbf{c}))^2}{\sigma_n^2},
\end{equation}
with $\mathbf{c}$ denoting the clean samples. In our experiments the noise variance is set to give the desired SNR.

The simulation consists of $1000$ experiments for each SNR, where in each experiment a new noise vector is created. We choose $\mathbf{t} = \tau\cdot(1/3\,\,2/3)^T$ and $\mathbf{a} = \tau\cdot(1\,\,1)^T$, where these vectors remain constant throughout the experiments. We define the error in time-delay estimation as the average of $\| \mathbf{t} - \hat{\mathbf{t}} \|_2^2$, where $\mathbf{t}$ and $\hat{\mathbf{t}}$ denote the true and estimated time-delays, respectively, sorted in increasing order. The error in amplitudes is similarly defined by $\| \mathbf{a} - \hat{\mathbf{a}} \|_2^2$. In
Fig.\RonenComment{~\ref{SingleCh:fig_periodic_noisy_case}} we show the error as a function of SNR for both delay and amplitude estimation. Estimation of the time-delays is the main interest in FRI literature, due to special nonlinear methods required for delay recovery. Once the delays are known, the standard least-squares method is typically used to recover the amplitudes, therefore, we focus on delay estimation in the sequel.%\footnote{Ronen comment: For the sinc method, I actually implemented a truncated periodic sinc, but I took enough periods to make the error negligible.}
\begin{figure*}
\centering \mbox { \subfigure[]{\includegraphics[scale=0.8]{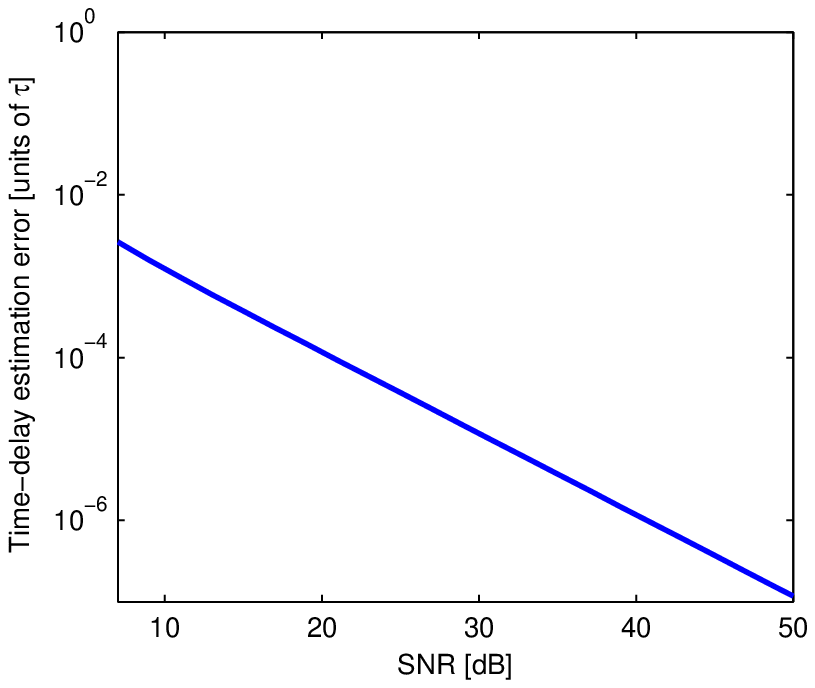}}
\subfigure[]{\includegraphics[scale=0.8]{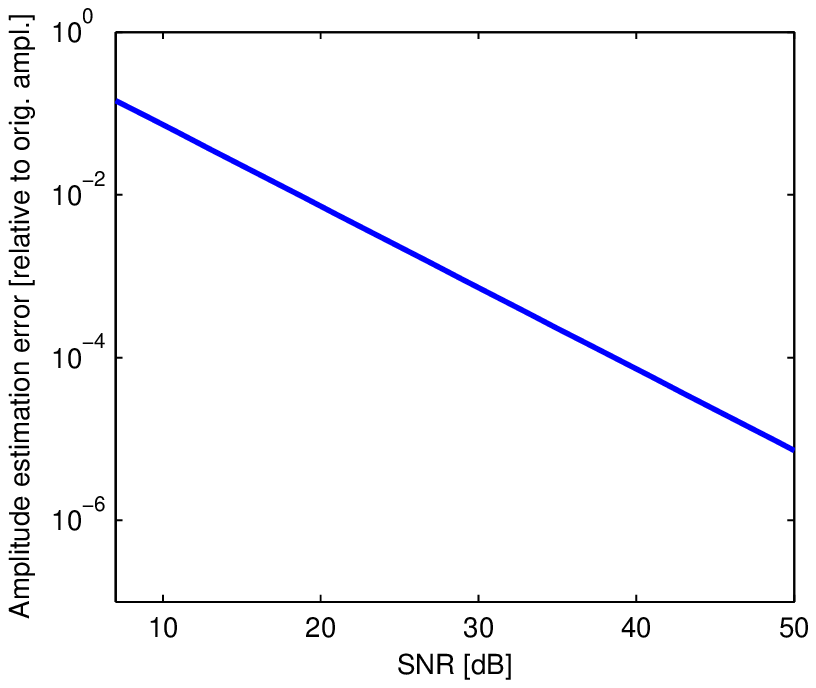}} }
\caption{Performance as a function of SNR, using our periodic approach. Estimation error in (a) delays, and (b) amplitudes.} \label{SingleCh:fig_periodic_noisy_case}
\end{figure*}

Finally, for the same setting we can improve reconstruction accuracy at the expense of
oversampling, as illustrated in
Fig.~\ref{SingleCh:fig_oversampling_comparison}. Here we show recovery
performance for oversampling factors of 1, 2, 4 and 8. The oversampling was exploited using the total least-squares method, followed by Cadzow's iterative denoising (both described in detail in \cite{VetterliMagazine2008}). \RonenComment{
\begin{figure}[h]
\centering
\includegraphics[scale=0.9]{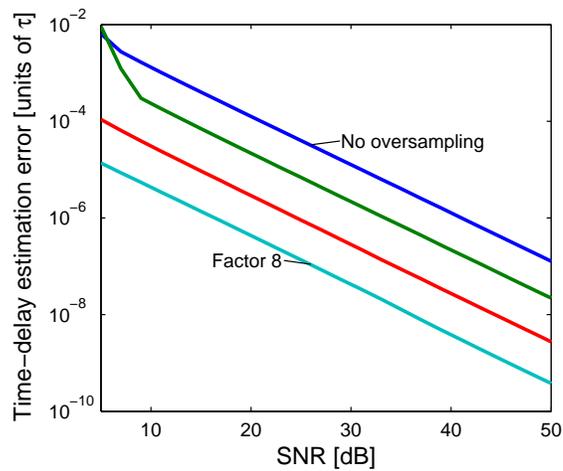}
\caption{The effect of oversampling on estimation error.
Oversampling by a factor of 1, 2, 4 and 8.}
\label{SingleCh:fig_oversampling_comparison}
\end{figure}}

\section{Finite Stream of Pulses}\label{SingleCh:sec_aperiodicFRI}
\subsection{Extension of SoS Class}
Consider now a finite stream of pulses, defined as
\begin{equation}\label{SingleCh:eq_sig_model_aperiodic}
  \tilde x(t) = \sum_{l=1}^L a_l h(t - t_l),\quad t_l \in [0,\tau),\, a_l \in \mathbb{R}, \, l=1\ldots L,
\end{equation}
where, as in Section \ref{SingleCh:sec_periodicFRI}, $h(t)$ is a known pulse shape, and $\{t_l,a_l\}_{l=1}^L$ are the unknown
delays and amplitudes. The time-delays $\{t_l\}_{l=1}^L$ are restricted to lie in a finite time interval $[0,\tau)$.
Since there are only $2L$ degrees of freedom, we wish to design a
sampling and reconstruction method which perfectly reconstructs $\tilde x(t)$ from $2L$ samples. In this section we assume that the pulse $h(t)$ has finite support $R$, i.e.,
\begin{equation}\label{SingleCh:eq_h_t_support_tau_div2}
  h(t) = 0,\, \forall |t| \geq R/2.
\end{equation}
This is a rather weak condition, since our primary interest is in very short pulses which have wide, or even infinite,
frequency support, and therefore cannot be sampled efficiently using classical sampling results for bandlimited signals. We now investigate the structure of the samples taken in the periodic case, and design a sampling kernel for the finite setting which obtains precisely the same samples $c[n]$, as in the periodic case.

In the periodic setting, the resulting samples are given by \eqref{SingleCh:eq_c_n}. Using $g(t)$ of \eqref{SingleCh:eq_g_t} as the sampling kernel we have
\begin{eqnarray}\label{SingleCh:eq_periodic_case_samples_phi_introduced}
\nonumber
c[n] &=& \langle g(t-nT),x(t) \rangle \\
\nonumber
&=& \sum_{m\in\mathbb{Z}} \sum_{l=1}^L a_l \int_{-\infty}^{\infty} h(t-t_l-m\tau)g^*(t-nT)dt \\
\nonumber
%\label{SingleCh:eq_phi_substitution}
&=& \sum_{m\in\mathbb{Z}} \sum_{l=1}^L a_l \int_{-\infty}^{\infty} h(t)g^*\left(t-(nT - t_l - m\tau)\right)dt \\
\label{SingleCh:eq_phi_inf_sum} &=& \sum_{m\in\mathbb{Z}} \sum_{l=1}^L a_l \varphi(nT - t_l - m\tau),
\end{eqnarray}
where we defined
\begin{equation}
  \varphi(\vartheta) = \langle g(t-\vartheta),h(t) \rangle.
\end{equation}
Since $g(t)$ in \eqref{SingleCh:eq_g_t} vanishes for all $|t|>\tau/2$ and $h(t)$ satisfies \eqref{SingleCh:eq_h_t_support_tau_div2}, the
support of $\varphi(t)$ is $(R+\tau)$, i.e.,
\begin{equation}\label{SingleCh:eq_phi_support}
  \varphi(t) = 0 \quad \textrm{for all } |t| \geq (R + \tau)/2.
\end{equation}
Using this property, the summation in \eqref{SingleCh:eq_phi_inf_sum} will be over nonzero values for indices $m$ satisfying
\begin{equation}\label{SingleCh:eq_phi_condition}
  |nT-t_l-m\tau|< (R+\tau)/2.
\end{equation}

Sampling within the window $[0,\tau)$, i.e., $nT\in[0,\tau)$, and noting that the time-delays lie in the interval
$t_l\in[0,\tau),\, l=1\ldots L$, \eqref{SingleCh:eq_phi_condition} implies that
\begin{equation}
  (R+\tau)/2 > |nT-t_l-m\tau| \geq |m|\tau - |nT-t_l| > (|m|-1)\tau.
\end{equation}
Here we used the triangle inequality and the fact that $|nT-t_l| < \tau$ in our setting. Therefore,
\begin{equation}\label{SingleCh:eq_m_condition}
  |m| < \frac{R/\tau+3}{2} \Rightarrow |m| \leq \left\lceil \frac{R/\tau+3}{2} \right\rceil - 1 \defn r,
\end{equation}
i.e., the elements of the sum in \eqref{SingleCh:eq_phi_inf_sum} vanish for all $m$ but the values in \eqref{SingleCh:eq_m_condition}.
Consequently, the infinite sum in \eqref{SingleCh:eq_phi_inf_sum} reduces to a finite sum over $m\leq |r|$ so that \eqref{SingleCh:eq_periodic_case_samples_phi_introduced} becomes
\begin{eqnarray}
\nonumber
c[n] &=& \sum_{m=-r}^r \sum_{l=1}^L a_l \varphi(nT - t_l - m\tau) \\
\nonumber
&=& \sum_{m=-r}^r \sum_{l=1}^L a_l \int_{-\infty}^{\infty} h(t-t_l)g^*(t-nT+m\tau)dt \\
&=& \left\langle \sum_{m=-r}^r g(t-nT+m\tau),\sum_{l=1}^L a_l h(t-t_l) \right\rangle,
\end{eqnarray}
where in the last equality we used the linearity of the inner product. Defining a function which consists of $(2r+1)$
periods of $g(t)$:
\begin{equation}\label{SingleCh:eq_g_r}
%  g_{3p}(t) = \sum_{m=-1}^1 g(t+m\tau)
  g_r(t) = \sum_{m=-r}^r g(t+m\tau),
\end{equation}
we conclude that
\begin{eqnarray}\label{SingleCh:eq_c_n_finite_final}
c[n] &=& \langle g_r(t-nT),\tilde x(t) \rangle.
\end{eqnarray}
Therefore, the samples $c[n]$ can be obtained by filtering the aperiodic signal $\tilde x(t)$ with the filter $g_r^*(-t)$
prior to sampling. This filter has compact support equal to
$(2r+1)\tau$. Since the finite setting samples \eqref{SingleCh:eq_c_n_finite_final} are identical to those of the periodic case \eqref{SingleCh:eq_phi_inf_sum}, recovery of the delays and amplitudes is performed exactly the same as in the periodic setting.

We summarize this result in the following theorem.
\begin{theorem}
  Consider the finite stream of pulses given by:
  \begin{equation*}
    \tilde x(t) = \sum_{l=1}^L a_l h(t - t_l),\, t_l \in [0,\tau),\, a_l \in \mathbb{R},
  \end{equation*}
  where $h(t)$ has finite support $R$. Choose a set $\mathcal{K}$ of consecutive indices for which $H(2\pi k/\tau) \neq 0,\, \forall k\in\mathcal{K}$. Then, $N$ samples given
  by:
  \begin{equation*}
    c[n] = \langle g_r(t-nT),\tilde x(t) \rangle,\quad n=0\ldots N-1,\, nT\in[0,\tau),
  \end{equation*}
  where $r$ is defined in \eqref{SingleCh:eq_m_condition}, and $g_r(t)$ is compactly supported and defined by \eqref{SingleCh:eq_g_r} (based on the filter $g(t)$ in \eqref{SingleCh:eq_G_omega}), uniquely determine the signal $\tilde x(t)$
  as long as $N \geq |\mathcal{K}| \geq 2L$.
\end{theorem}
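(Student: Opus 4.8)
The plan is to reduce the finite statement entirely to the periodic result of Theorem~\ref{SingleCh:prop_periodic_general_codition_filter}, exploiting the sample identity already established in the derivation preceding the theorem. First I would form the periodic continuation $x(t) = \sum_{m\in\mathbb{Z}}\sum_{l=1}^L a_l h(t-t_l-m\tau)$ of $\tilde x(t)$, which by construction shares exactly the same parameter set $\{t_l,a_l\}_{l=1}^L$. The central observation is that the samples taken in the finite setting coincide with those of this periodic signal: the chain of equalities leading to \eqref{SingleCh:eq_c_n_finite_final} shows that, because $\varphi$ has support $(R+\tau)$ and both $nT$ and $t_l$ lie in $[0,\tau)$, the infinite sum over $m$ collapses to $|m|\le r$ with $r$ given by \eqref{SingleCh:eq_m_condition}, whence $c[n]=\langle g_r(t-nT),\tilde x(t)\rangle = \langle g(t-nT),x(t)\rangle$. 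Thus the $N$ numbers produced in the finite problem are literally the periodic-case samples.

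Second, I would verify that the SoS kernel $g(t)$ of \eqref{SingleCh:eq_g_t} qualifies as an admissible sampling kernel for the periodic scheme. This is immediate from the construction in \eqref{SingleCh:eq_G_omega}: since $b_k\neq 0$ for every $k\in\mathcal{K}$ and each shifted sinc equals $1$ at its own node and $0$ at the others, $G(\omega)$ satisfies condition \eqref{SingleCh:eq_S_omega}. Hence Theorem~\ref{SingleCh:prop_periodic_general_codition_filter} applies to $x(t)$ together with its samples: provided $N\ge|\mathcal{K}|\ge 2L$, these samples uniquely determine $x(t)$, equivalently the vector of Fourier coefficients $\mathbf{x}$ and, through the Vandermonde left-inversion followed by the annihilating-filter step, the parameters $\{t_l,a_l\}_{l=1}^L$.

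Finally, since the finite-case samples equal the periodic-case samples, they inherit the same uniqueness: they determine $\{t_l,a_l\}_{l=1}^L$ uniquely, and because $\tilde x(t)=\sum_{l=1}^L a_l h(t-t_l)$ is fully specified by these parameters together with the known pulse $h$, the samples uniquely determine $\tilde x(t)$, as claimed. The bulk of the technical work --- showing that a compactly supported filter reproduces the periodic samples --- has already been carried out in the derivation leading to \eqref{SingleCh:eq_c_n_finite_final}; the only point requiring care is the truncation argument fixing $r$, which rests on the triangle-inequality bound $|nT-t_l-m\tau|>(|m|-1)\tau$, valid precisely because $|nT-t_l|<\tau$ in this setting. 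I expect no further obstacle: once the sample equivalence is in hand, the theorem is essentially a corollary of Theorem~\ref{SingleCh:prop_periodic_general_codition_filter}.
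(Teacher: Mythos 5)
Your proposal is correct and follows essentially the same route as the paper: it establishes the sample identity $c[n]=\langle g_r(t-nT),\tilde x(t)\rangle=\langle g(t-nT),x(t)\rangle$ via the support of $\varphi$ and the truncation to $|m|\le r$, then invokes Theorem~\ref{SingleCh:prop_periodic_general_codition_filter} on the periodic continuation since $g(t)$ satisfies \eqref{SingleCh:eq_S_omega} by construction. This is precisely the argument the paper gives in the derivation leading to \eqref{SingleCh:eq_c_n_finite_final}.
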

% nT\in[0,\tau) is the same as saying T < \tau/N. Consider changing.

If, for example, the support $R$ of $h(t)$ satisfies $R \leq \tau$ then we obtain from \eqref{SingleCh:eq_m_condition} that $r=1$.
Therefore, the filter in this case would consist of $3$ periods of $g(t)$:
\begin{equation}\label{SingleCh:eq_g_3p}
  g_{3p}(t) \defn g_r(t)\big|_{r=1} = g(t-\tau) + g(t) + g(t+\tau).
\end{equation}
Practical implementation of the filter may be carried out using delay-lines. The relation of this scheme to previous approaches will be investigated in Section~\ref{SingleCh:sec_related_work}.

\subsection{Simulations}\label{SingleCh:subsec_Simulations_Finite}
\subsubsection{Demonstration of the Sampling Scheme}
% Created by the script "for_paper_deltas_script_ver2_5_AperiodicCaseDemo141009.m"
The input signal $\tilde x(t)$ consists of $L=5$ delayed and weighted versions of the pulse $h(t)=\delta(t)$. The delays
and weights were chosen randomly. We choose $\mathcal{K}=\{-L,\ldots,L\}$, so that $M=|\mathcal{K}|=11$. Since the support of $h(t)$ satisfies $R\leq \tau$ the parameter $r$ in
\eqref{SingleCh:eq_m_condition} equals $1$, and therefore we filter $\tilde x(t)$ with $g_{3p}(t)$ defined in \eqref{SingleCh:eq_g_3p}. The
coefficients $b_k,\, k\in\mathcal{K}$ were all set to one. The output of the filter is sampled uniformly $N$ times, with
sampling period $T = \tau/N$, where $N=M=11$. \RonenComment{
\begin{figure}[h]
\centering
\includegraphics{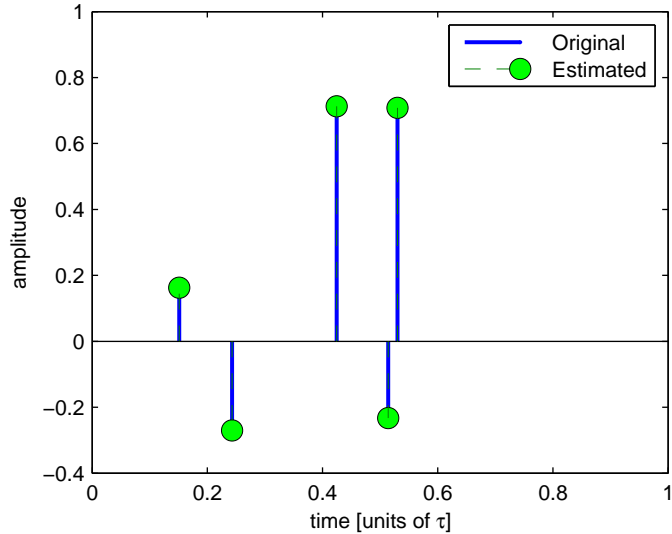}
\caption{Application of the filter $g_{3p}(t)$ on a finite stream of $L=5$ diracs.} \label{SingleCh:fig_aperidic_demonstration}
\end{figure}}
Perfect reconstruction is achieved as can be seen in Fig.\RonenComment{~\ref{SingleCh:fig_aperidic_demonstration}}. The
estimation is exact to numerical precision.

\subsubsection{High Order Problems}
% Created by the script "for_paper_deltas_script_ver2_5_AperiodicHighOrder151009.m"
\RonenComment{
\begin{figure}
\centering
\includegraphics{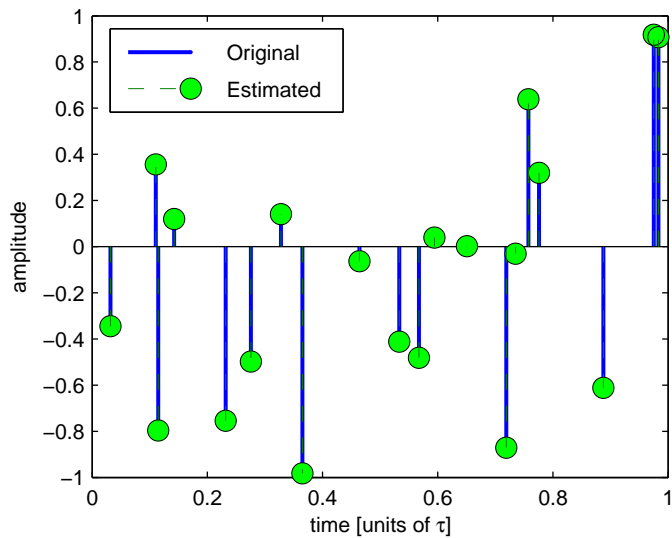}
\caption{High order problems: application of the filter $g_{3p}(t)$ on a finite stream of $L=20$ diracs.}
\label{SingleCh:fig_aperidic_demonstration_K20}
\end{figure}}
The same simulation was carried out with $L=20$ diracs. The results are shown in
Fig.\RonenComment{~\ref{SingleCh:fig_aperidic_demonstration_K20}}. Here again, the reconstruction is perfect even for large $L$.

\subsubsection{Noisy Case}
% Created by ???
\RonenComment{
\begin{figure*}
\centering \mbox {
\subfigure[$L=2$]{\includegraphics[scale=0.8]{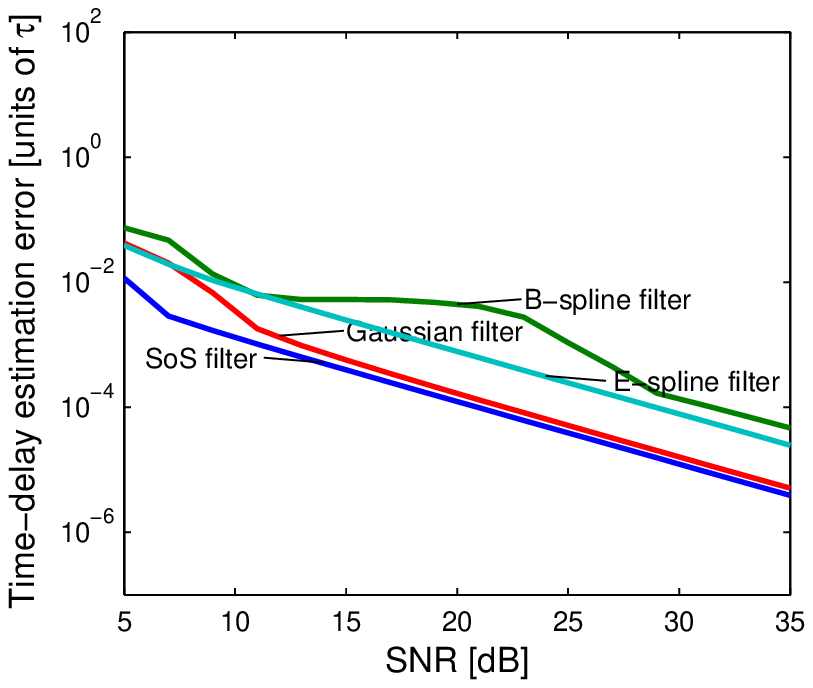}}
\subfigure[$L=3$]{\includegraphics[scale=0.8]{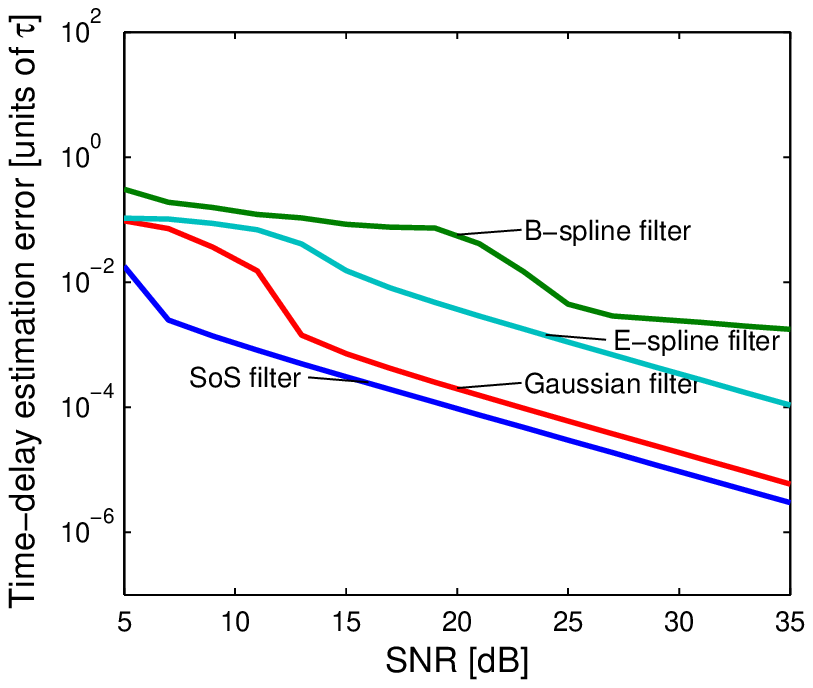}}
} \\
\centering \mbox { \subfigure[$L=5$]{\includegraphics[scale=0.8]{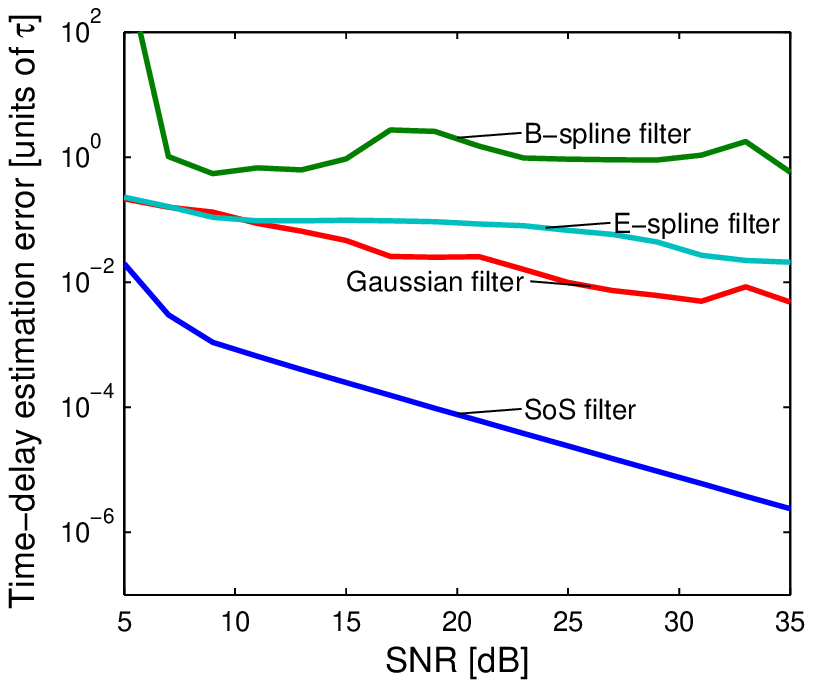}}
\subfigure[$L=20$]{\includegraphics[scale=0.8]{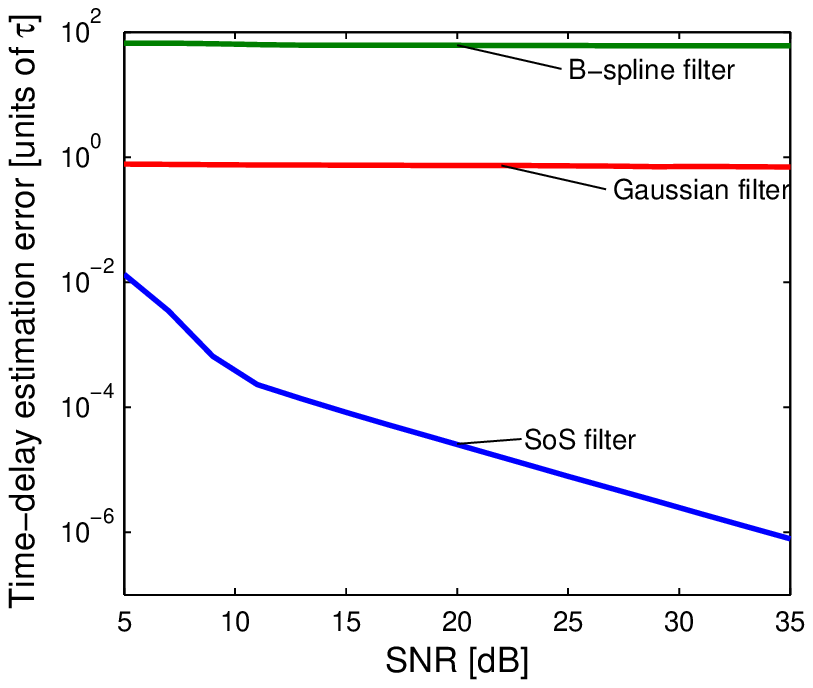}} } \caption{Performance in the presence of noise:
finite stream case. Our method vs. B-spline, E-spline \cite{DragottiStrangFix2007} and Gaussian \cite{Vetterli2002_Basic} sampling
kernels. (a) $L=2$ dirac pulses are present, (b) $L=3$ pulses, (c) high value of $L=5$ pulses, and (d) the performance for
a very high value of $L=20$ (without E-spline simulation, due to computational complexity of calculating the time-domain expression for high values of $L$).} \label{SingleCh:fig_Finite_Stream_Noisy}
\end{figure*}
}
We now consider the performance of our method in the presence of noise. In addition, we compare our performance to the B-spline and E-spline methods proposed in \cite{DragottiStrangFix2007}, and to the Gaussian sampling kernel \cite{Vetterli2002_Basic}. We examine 4 scenarios, in which the signal consists of $L=2,3,5,20$ diracs\footnote{Due to computational complexity of calculating the time-domain expression for high order E-splines, the functions were simulated up to order 9, which allows for $L=5$ pulses.}. In our setup, the
time-delays are equally distributed in the window $[0,\tau)$, with $\tau=1$, and remain constant throughout the experiments. All amplitudes are set to one.

The index set of the SoS filter is $\mathcal{K}=\{-L,\ldots,L\}$. Both B-splines and E-splines are taken of order $2L-1$, and for E-splines we use purely imaginary exponents, equally distributed around the complex unit circle. The sampling period for all methods is $T=\tau/N$.

The method of noise corruption is the same as in Section~\ref{SingleCh:subsubsec_periodic_noisy}. In order to maintain the same SNR conditions throughout all methods, the noise level is chosen with respect to the resulting sequence of samples. In other words, $\sigma_n$ in \eqref{SingleCh:eq_SNR_definition} is method-dependent, and is determined by the desired SNR and the samples of the specific technique. Hard thresholding was implemented in order to improve
the spline methods, as suggested by the authors in \cite{DragottiStrangFix2007}. The threshold was chosen to be
$3\sigma_n$, where $\sigma_n$ is the standard deviation of the AWGN. For the Gaussian sampling kernel the parameter
$\sigma$ was optimized and took on the value of $\sigma=0.25,0.28,0.32,0.9$, respectively.

The results are given in
Fig.\RonenComment{~\ref{SingleCh:fig_Finite_Stream_Noisy}}. For $L=2$ all methods are stable, where E-splines exhibit better performance than B-splines, and Gaussian and SoS approaches demonstrate the lowest errors. As the value of $L$ grows, the advantage of the SoS filter becomes more prominent, where for $L\geq 5$, the performance of Gaussian and both spline methods deteriorate and have errors approaching the order of $\tau$. In contrast, the SoS filter retains its performance nearly unchanged even up to $L=20$, where the B-spline and Gaussian methods are unstable. The improved
version of the Gaussian approach presented in
\cite{Vetterli_FRI_InThePresenceOfNoise} would not perform better
in this high order case, since it fails for $L>9$, as
noted by the authors. A comparison of our approach to previous methods will be detailed in Section~\ref{SingleCh:sec_related_work}.

\section{Infinite Stream of Pulses}\label{SingleCh:sec_infiniteFRI}
We now consider the case of an infinite stream of pulses
\begin{equation}\label{SingleCh:eq_sig_model_infinite}
  z(t) = \sum_{l\in\mathbb{Z}} a_l h(t - t_l),\quad t_l, a_l \in \mathbb{R}.
\end{equation}
We assume that the infinite signal has a bursty character, i.e., the signal has two distinct phases: a) bursts of maximal duration $\tau$ containing at most $L$ pulses, and b) quiet phases between bursts. For the sake of clarity we begin with the case $h(t)=\delta(t)$. For this choice the filter $g_r^*(-t)$ in \eqref{SingleCh:eq_g_r}
reduces to $g_{3p}^*(-t)$ of \eqref{SingleCh:eq_g_3p}.

Since the filter $g_{3p}^*(-t)$ has compact support $3\tau$ we
are assured that the current burst cannot influence samples taken $3\tau/2$ seconds before or after it. In the
finite case we have confined ourselves to sampling within the interval $[0,\tau)$. Similarly, here, we assume that the samples are taken during the burst duration. Therefore, if the minimal spacing between any two consecutive bursts is
$3\tau/2$, then we are guaranteed that each sample taken during the burst is influenced by one burst only, as depicted in Fig.\RonenComment{~\ref{SingleCh:fig_bursty_signal}}. Consequently, the infinite problem can be reduced to a sequential
solution of local distinct finite order problems, as in Section \ref{SingleCh:sec_aperiodicFRI}. Here the compact support of our
filter comes into play, allowing us to apply local reconstruction methods. \RonenComment{
\begin{figure}[h]
\centering
\includegraphics[scale=0.8]{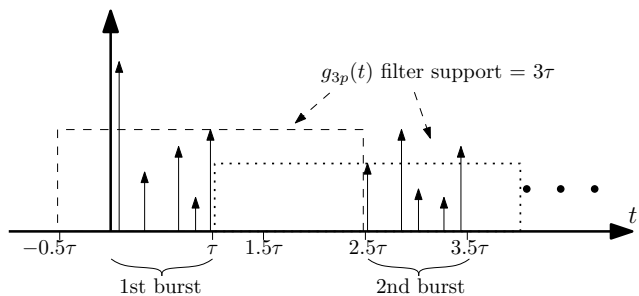}
\caption{Bursty signal $z(t)$. Spacing of $3\tau/2$ between bursts ensures that the influence of the current burst ends
before taking the samples of the next burst. This is due to the finite support, $3\tau$ of the sampling kernel
$g_{3p}^*(-t)$.} \label{SingleCh:fig_bursty_signal}
\end{figure}}

In the above argument we assume we know the locations of the bursts, since we must acquire samples from within the burst duration. Samples outside the burst duration are contaminated by energy from adjacent bursts. Nonetheless, knowledge of burst locations is available in many applications such as synchronized communication
where the receiver knows when to expect the bursts, or in radar or imaging scenarios where the transmitter is itself the
receiver.

We now state this result in a theorem.
\begin{theorem}
  Consider a signal $z(t)$ which is a stream of bursts consisting of delayed and weighted diracs. The maximal burst duration
  is $\tau$, and the maximal number of pulses within each burst is $L$. Then, the samples given by
  \begin{equation*}
    c[n] = \langle g_{3p}(t-nT),z(t) \rangle,\quad n\in\mathbb{Z}
  \end{equation*}
  where $g_{3p}(t)$ is defined by \eqref{SingleCh:eq_g_3p},
  are a sufficient characterization of $z(t)$ as long as the spacing between two adjacent bursts is greater
  than $3\tau/2$, and the burst locations are known.
\end{theorem}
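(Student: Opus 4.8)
The plan is to reduce the infinite recovery problem to a sequence of decoupled finite problems, each solved by the finite-stream result just established (the case $r=1$), with the compact support of $g_{3p}$ providing the decoupling. Because the burst locations are assumed known, I may treat each burst independently, translating the time origin so that the burst under consideration occupies $[0,\tau)$, carrying at most $L$ pulses at delays $t_l\in[0,\tau)$ with amplitudes $a_l$.

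First I would restrict attention to those samples whose sampling instants lie inside this window, i.e. $nT\in[0,\tau)$. For $h(t)=\delta(t)$, writing $c[n]=\langle g_{3p}(t-nT),z(t)\rangle$ expands into a sum over \emph{all} pulses of $z(t)$, the contribution of a pulse at $t_l$ being $a_l\,g_{3p}^*(t_l-nT)$. The heart of the argument is to show that only the current burst survives. Since $g_{3p}$ of \eqref{SingleCh:eq_g_3p} has support $3\tau$, it vanishes for $|t|\geq 3\tau/2$, so the $l$th term is nonzero only when $|t_l-nT|<3\tau/2$. For a pulse of the current burst, $t_l,nT\in[0,\tau)$ yields $|t_l-nT|<\tau<3\tau/2$, so it is kept. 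For a pulse of any neighbouring burst, the spacing condition (larger than $3\tau/2$ between adjacent bursts) forces that pulse to be farther than $3\tau/2$ from every in-window instant $nT$, so its term is annihilated. Concretely, from $nT\in[0,\tau)$ the filter reach $(nT-3\tau/2,\,nT+3\tau/2)$ is contained in $(-3\tau/2,\,5\tau/2)$, while the nearest pulse of an adjacent burst lies outside this interval.

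After the decoupling, the restricted samples are identical to the finite-stream samples $\langle g_{3p}(t-nT),\tilde x(t)\rangle$ of the single burst $\tilde x(t)=\sum_l a_l\delta(t-t_l)$ on $[0,\tau)$. Invoking the finite-stream theorem, these samples uniquely determine the delays and amplitudes of the burst provided at least $|\mathcal{K}|\geq 2L$ of the sampling instants fall in $[0,\tau)$, which merely fixes the per-burst sampling density (roughly $T\leq\tau/(2L)$). Sweeping this procedure over all bursts---legitimate precisely because their locations are known, so each window and its associated samples can be isolated---reconstructs every pulse of $z(t)$, establishing that the given samples are a sufficient characterization. I expect the decoupling step to be the only real obstacle: it relies on exactly matching the $3\tau$ support of $g_{3p}$ against the strict $3\tau/2$ spacing, and care must be taken at the window edges $nT\to 0^+$ and $nT\to\tau^-$, where the filter reach is largest toward the neighbouring bursts.
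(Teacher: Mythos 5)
Your proposal is correct and mirrors the paper's own argument: the compact support $3\tau$ of $g_{3p}$ plus the spacing greater than $3\tau/2$ decouples the in-burst samples from neighbouring bursts, and the recovery of each burst then follows by invoking the finite-stream theorem with $r=1$, exactly as done in Section~\ref{SingleCh:sec_infiniteFRI}. Your explicit support/reach computation is just a fleshed-out version of the paper's informal decoupling step, so there is nothing substantive to add.
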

Extending this result to a general pulse $h(t)$ is quite straightforward, as long as $h(t)$ is compactly supported with
support $R$, and we filter with $g_r^*(-t)$ as defined in \eqref{SingleCh:eq_g_r} with the appropriate $r$ from \eqref{SingleCh:eq_m_condition}. If we can choose a set $\mathcal{K}$ of consecutive indices for which
$H(2\pi k/\tau) \neq 0,\, \forall k\in\mathcal{K}$ and we are guaranteed
that the minimal spacing between two adjacent bursts is greater than $\left( (2r+1)\tau + R\right)/2$, then the above theorem holds.

%------------------------------------------------------------------%
\section{Related Work}\label{SingleCh:sec_related_work}
In this section we explore the relationship between our approach and previously developed solutions \cite{Vetterli2002_Basic,VetterliMagazine2008,DragottiStrangFix2007,UnserFRI2008}.
\subsection{Periodic Case}
The work in \cite{Vetterli2002_Basic} was the first to address
efficient sampling of pulse streams, e.g., diracs. Their approach
for solving the periodic case was ideal lowpass filtering,
followed by uniform sampling, which allowed to obtain the Fourier
series coefficients of the signal. These coefficients are then
processed by the annihilating filter to obtain the unknown
time-delays and amplitudes. In Section~\ref{SingleCh:sec_periodicFRI}, we
derived a general condition on the sampling kernel
\eqref{SingleCh:eq_S_omega}, under which recovery is guaranteed. The
lowpass filter of \cite{Vetterli2002_Basic} is a special
case of this result. The noise robustness of both the lowpass
approach and our more general method is high as long as the pulses
are well separated, since reconstruction from Fourier series
coefficients is stable in this case. Both approaches achieve the minimal number of samples.

The lowpass filter is bandlimited and consequently has infinite
time-support. Therefore, this sampling scheme is unsuitable for
finite and infinite streams of pulses. The SoS class introduced in
Section~\ref{SingleCh:sec_periodicFRI} consists of compactly supported
filters which is crucial to enable the extension of our results to
finite and infinite streams of pulses. A comparison between the two methods is shown in Table~\ref{SingleCh:tab_periodic}.
\renewcommand{\tabW}{p{2.2cm}}
\begin{table}[h]
\renewcommand{\arraystretch}{1.3} % Q: What does this do? A: It adds spacing between rows, otherwise it gets crowded
\caption{Periodic case - Comparison with previous work}
\label{SingleCh:tab_periodic}
\centering
\begin{tabular}{p{2.8cm} || p{2.1cm} | p{2.3cm}}
\hline\hline
%\bfseries First & \bfseries Next\\
Feature & Lowpass filter \cite{Vetterli2002_Basic} & Proposed method \\
\hline\hline
Degrees of freedom & \multicolumn{2}{|c}{$2L$} \\
\hline
No. of samples & $2L+1$ & $2L$ \\
\hline
Time-support & Infinite & $\tau$, finite support allows extension to finite \& infinite cases \\
\hline
Noise Robustness & High & High \\
\hline
Analog implementation & Approximate lowpass filter & Approximate finite support filter \\
\hline\hline
\end{tabular}
\end{table}
\subsection{Finite Pulse Stream}
The authors of \cite{Vetterli2002_Basic} proposed a Gaussian sampling kernel for sampling finite streams of Diracs. The Gaussian method is numerically unstable, as mentioned in \cite{Vetterli_FRI_InThePresenceOfNoise}, since the samples are multiplied by a rapidly diverging or decaying exponent. Therefore, this approach is unsuitable for $L\geq 6$. Modifications proposed in \cite{Vetterli_FRI_InThePresenceOfNoise} exhibit better performance and stability. However, these methods require substantial oversampling, and still exhibit instability for $L>9$.

In \cite{DragottiStrangFix2007} the family of polynomial reproducing kernels was introduced as sampling filters for the model \eqref{SingleCh:eq_sig_model_aperiodic}. B-splines were proposed as a specific example. The B-spline sampling filter enables obtaining moments of the signal, rather than Fourier coefficients. The moments are then processed with the same annihilating filter used in previous methods. However, as mentioned by the authors, this approach is unstable for high values of $L$. This is due to the fact that in contrast to the
estimation of Fourier coefficients, estimating high order moments
is unstable, since unstable weighting of the
samples is carried out during the process.

Another general family introduced in \cite{DragottiStrangFix2007} for the finite model is the class of exponential reproducing kernels. As a specific case, the authors propose E-spline sampling kernels. The CTFT of an E-spline of order $N+1$ is described by
\begin{equation}\label{SingleCh:eq_Espline_CTFT}
  \hat\beta_{\boldsymbol\alpha}(\omega) = \prod_{n=0}^{N} \frac{1-e^{\alpha_n-j\omega}}{j\omega-\alpha_n},
\end{equation}
where $\boldsymbol\alpha = (\alpha_0,\alpha_1,\ldots,\alpha_N)$ are free parameters. In order to use E-splines as sampling kernels for pulse streams, the authors propose a specific structure on the $\alpha$'s, $\alpha_n = \alpha_0 + n\lambda$. Choosing exponents having a non-vanishing real part results in unstable weighting, as in the B-spline case. However, choosing the special case of pure imaginary exponents in the E-splines, already suggested by the authors, results in a reconstruction method based on Fourier coefficients, which demonstrates an interesting relation to our method. The Fourier coefficients are obtained by applying a matrix consisting of the exponent spanning coefficients $\{c_{m,n}\}$, (see \cite{DragottiStrangFix2007}), instead of our Vandermonde matrix relation \eqref{SingleCh:eq_obtaining_x_vector_from_samples}. With this specific choice of parameters the E-spline function satisfies \eqref{SingleCh:eq_S_omega}.

Interestingly, with a proper choice of spanning coefficients, it can be shown that the SoS class can reproduce exponentials with frequencies $\{2\pi k/\tau\}_{k\in\mathcal{K}}$, and therefore satisfies the general exponential reproduction property of \cite{DragottiStrangFix2007}. However, the SoS filter proposes a new sampling scheme which has substantial advantages over existing methods including E-splines. The first advantage is in the presence of noise, where both methods have the following structure:
\begin{equation}\label{SingleCh:eq_Transformation_CondNumber}
  \mathbf{y} = \mathbf{A}\mathbf{x} + \mathbf{w},
\end{equation}
where $\mathbf{w}$ is the noise vector. While the Fourier coefficients vector $\mathbf{x}$ is common to both approaches, the linear transformation $\mathbf{A}$ is method dependent, and therefore the sample vector $\mathbf{y}$ is different. In our approach with $g(t)$ of \eqref{SingleCh:eq_Dirichlet_filter}, $\mathbf{A}$ is the DFT matrix, which for any order $L$ has a condition number of $1$. However, in the case of E-splines the transformation matrix $\mathbf{A}$ consists of the E-spline exponential spanning coefficients, which has a much higher condition number, e.g., above 100 for $L=5$. Consequently, some Fourier coefficients will have much higher values of noise than others. This scenario of high variance between noise levels of the samples is known to deteriorate the performance of spectral analysis methods \cite{Stoica1997}, the annihilating filter being one of them. This explains our simulations which show that the SoS filter outperforms the E-spline approach in the presence of noise.

When the E-spline coefficients $\alpha$ are pure imaginary, it can be easily shown that \eqref{SingleCh:eq_Espline_CTFT} becomes a multiplication of shifted sincs. This is in contrast to the SoS filter which consists of a sum of sincs in the frequency domain. Since multiplication in the frequency domain translates to convolution in the time domain, it is clear that the support of the E-spline grows with its order, and in turn with the order of the problem $L$. In contrast, the support of the SoS filter remains unchanged. This observation becomes important when examining the infinite case. The constraint on the signal in \cite{DragottiStrangFix2007} is that no more than $L$ pulses be in any interval of length $LPT$, $P$ being the support of the filter, and $T$ the sampling period. Since $P$ grows linearly with $L$, the constraint cast on the infinite stream becomes more stringent, quadratically with $L$. On the other hand, the constraint on the infinite stream using the SoS filter is independent of $L$.

We showed in simulations that typically for $L\geq 5$ the estimation errors, using both B-spline and E-spline sampling kernels, become very large. In contrast, our approach leads to stable reconstruction even for very high values of $L$, e.g., $L=100$. In addition, even for low values of $L$ we showed in simulations that although the E-spline method has improved performance over B-splines, the SoS reconstruction method outperforms both spline approaches. A comparison is described in Table~\ref{SingleCh:tab_finite}.
\renewcommand{\tabW}{p{1.5cm}}
\begin{table}[h]
\renewcommand{\arraystretch}{1.3}
\caption{Finite case - comparison}
\label{SingleCh:tab_finite}
\centering
\begin{tabular}{p{2.3cm} || \tabW | \tabW | \tabW}
\hline\hline
%\bfseries First & \bfseries Next\\
Feature & Gaussian filter \cite{Vetterli2002_Basic} & Spline Filter \cite{DragottiStrangFix2007} & Proposed method \\
\hline\hline
Degrees of freedom & \multicolumn{3}{|c}{$2L$} \\
\hline
No. of samples & \multicolumn{3}{|c}{$2L$} \\
%\hline
%Allowed pulse shapes & Diracs & Diracs, diff. Diracs \& Compactly supported with no DC & Any compactly supported pulse \\
\hline
Time-support & Infinite & Finite & Finite \\
\hline
Stability & Unstable for $L\geq 6$ & Unstable for $L \geq 5$ & Stable even for $L=100$ \\
\hline
Noise Robustness & Low & Low & High \\
%\hline
%Analog implementation & Approximate lowpass filter & Approximate finite support filter. Truncated Fourier series form. \\
\hline\hline
\end{tabular}
\end{table}

\subsection{Infinite Streams}
The work in \cite{DragottiStrangFix2007} addressed the infinite stream case, with $h(t) = \delta(t)$.
They proposed filtering the signal with a polynomial reproducing sampling kernel prior to sampling. If the signal has at most $L$ diracs within any interval of duration $LPT$, where $P$ denotes the support of the sampling filter and $T$ the sampling period, then the samples are a sufficient characterization of the signal. This condition allows to divide the infinite stream into a sequence of finite case problems. In our approach the quiet phases of $1.5\tau$ between the bursts of length $\tau$ enable the reduction to the finite case.

Since the infinite solution is based on the finite one, our method is advantageous in terms of stability in high order problems and noise robustness. However, we do have an additional requirement of quiet phases between the bursts.

Regarding the sampling rate, the number of degrees of freedom of the signal per unit time, also known as the rate
of innovation, is $\rho=2L/2.5\tau$, which is the critical sampling rate. Our sampling rate is $2L/\tau$ and therefore we oversample by a factor of $2.5$. In the same scenario, the method in
\cite{DragottiStrangFix2007} would require a sampling rate of $LP/2.5\tau$, i.e., oversampling by
a factor of $P/2$. Properties of polynomial reproducing kernels imply that $P\geq 2L$, therefore for any $L\geq 3$, our method exhibits more efficient sampling. A table comparing the various features is shown in Table~\ref{SingleCh:tab_infinite}.

% Unser's work
Recent work \cite{UnserFRI2008} presented a low complexity method for reconstructing streams of pulses (both infinite
and finite cases) consisting of diracs. However the basic assumption of this method is that there is at most one dirac per
sampling period. This means we must have prior knowledge about a lower limit on the spacing between two consecutive
deltas, in order to guarantee correct reconstruction. In some cases such a limit may not exist; even if it
does it will usually force us to sample at a much higher rate than the critical one.
\renewcommand{\tabW}{p{2.4cm}}
\begin{table}[h]
\renewcommand{\arraystretch}{1.3}
\caption{Infinite case - Comparison}
\label{SingleCh:tab_infinite}
\centering
\begin{tabular}{\tabW || \tabW | \tabW}
\hline\hline
%\bfseries First & \bfseries Next\\
Feature & Spline filter \cite{DragottiStrangFix2007} & Proposed method \\
\hline\hline
Signal model & No more than $L$ pulses in any interval of $LPT \textrm{ sec}$ & Bursty character: burst - $\tau$, quiet phase $1.5\tau$ \\
\hline
Rate of innovation & \multicolumn{2}{|c}{$\rho \triangleq 2L/2.5\tau$} \\
\hline
Sampling rate & $P\cdot \rho/2$ & $2.5\rho$ \\
\hline
For $L \geq 3 \quad \Rightarrow P/2 \geq 3$ & \multicolumn{2}{|c}{Proposed sampling scheme is more efficient} \\
\hline
Noise Robustness & Low & High \\
\hline
Stability & Unstable for $L \geq 5$ & Stable for $L=100$ \\
\hline\hline
\end{tabular}
\end{table}

%------------------------------------------------------------------%
\section{Application - Ultrasound Imaging}\label{SingleCh:sec_ultrasound}
An interesting application of our framework is ultrasound imaging.
In ultrasonic imaging an acoustic pulse is transmitted into the
scanned tissue. The pulse is reflected due to changes in acoustic
impedance which occur, for example, at the boundaries between two
different tissues. At the receiver, the echoes are recorded, where
the time-of-arrival and power of the echo indicate the scatterer's
location and strength, respectively. Accurate estimation of tissue
boundaries and scatterer locations allows for reliable detection
of certain illnesses, and is therefore of major clinical
importance. The location of the boundaries is often more
important than the power of the reflection. This stream of
pulses is finite since the pulse energy decays within the tissue.
We now demonstrate our method on real 1-dimensional (1D)
ultrasound data.

The multiple echo signal which is recorded at the receiver can be modeled as a finite stream of pulses, as in \eqref{SingleCh:eq_sig_model_aperiodic}. The unknown time-delays correspond to the locations of the various scatterers, whereas the amplitudes correspond to their reflection coefficients. The pulse shape in this case is a Gaussian defined in \eqref{SingleCh:eq_gaussian_pulse}, due the physical characteristics of the electro-acoustic transducer (mechanical damping). We assume the received pulse-shape is known, either by assuming it is unchanged through propagation, through physically modeling ultrasonic wave propagation, or by prior estimation of received pulse. Full investigation of mismatch in the pulse shape is left for future research.

In our setting, a phantom consisting of uniformly spaced pins, mimicking point scatterers, was scanned by GE Healthcare's Vivid-i portable ultrasound imaging system \cite{senior2006portable,mondillo2006hand}, using a 3S-RS probe. We use the data recorded by a single element in the probe, which is modeled as a 1D stream of pulses. The center frequency of the probe is $f_c = 1.7021\textrm{ MHz}$, The width of the transmitted Gaussian
pulse in this case is $\sigma = 3\cdot 10^{-7}\textrm{ sec}$, and the depth of imaging is $R_{\textrm{max}} = 0.16\textrm{ m}$ corresponding to a time window of\footnote{The speed of sound within the tissue is $1550 \textrm{ m/sec}$.} $\tau = 2.08\cdot 10^{-4}\textrm{ sec}$.

In this experiment all filtering and sampling operations are carried out digitally in simulation. The analog filter required by the sampling scheme is replaced by a lengthy Finite Impulse Response (FIR) filter. Since the sampling frequency of the element in the system is $f_s = 20\textrm{ MHz}$, which is more than $5$ times higher than the Nyquist rate, the recorded data represents the continuous signal reliably. Consequently, digital filtering of the high-rate sampled data vector ($4160$ samples) followed by proper decimation mimics the original analog sampling scheme with high accuracy. The recorded signal is
depicted in Fig.\RonenComment{~\ref{SingleCh:fig_RealDataRecorded}}. The band-pass ultrasonic signal is demodulated to base-band,
i.e., envelope-detection is performed, before inserted into the process.
\begin{figure}[h]
\centering
\includegraphics{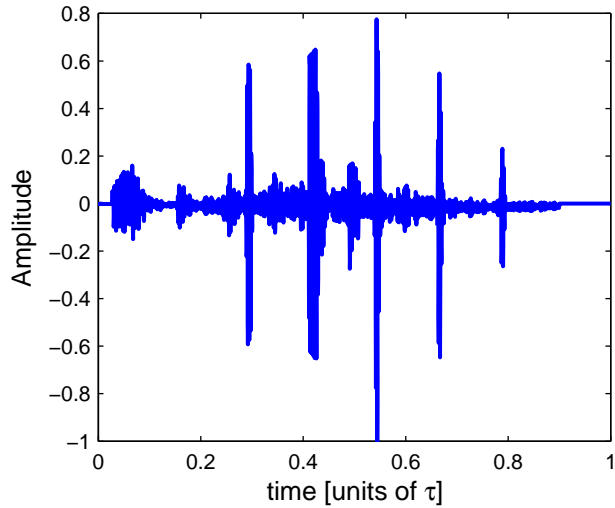}
\caption{Recorded ultrasound imaging signal. The data was acquired by GE healthcare's Vivid-i ultrasound imaging system.}
\label{SingleCh:fig_RealDataRecorded}
\end{figure}

We carried out our sampling and reconstruction scheme on the aforementioned data. We set $L=4$, looking for the strongest
4 echoes. Since the data is corrupted by strong noise we over-sampled the signal, obtaining twice the minimal number of samples. In addition, hard-thresholding of the samples was implemented, where we set the threshold to 10 percent of the maximal value. We obtained $N=17$ samples by decimating the output of the lengthy FIR digital filter imitating $g^*_{3p}(-t)$ from \eqref{SingleCh:eq_g_3p}, where the
coefficients $\{b_k\}$ were all set to one. In Fig.\RonenComment{~\ref{SingleCh:fig_RealDataReconstruction}}a the reconstructed
signal is depicted vs. the full demodulated signal using all $4160$ samples. Clearly, the time-delays were estimated with high precision. The
amplitudes were estimated as well, however the amplitude of the second pulse has a large error. This is probably due to
the large values of noise present in its vicinity. However, as mentioned earlier, the exact locations of the scatterers is often more important than the accurate reflection coefficients. \RonenComment{
\begin{figure*}
\centering \mbox { \subfigure[]{\includegraphics[scale=0.8]{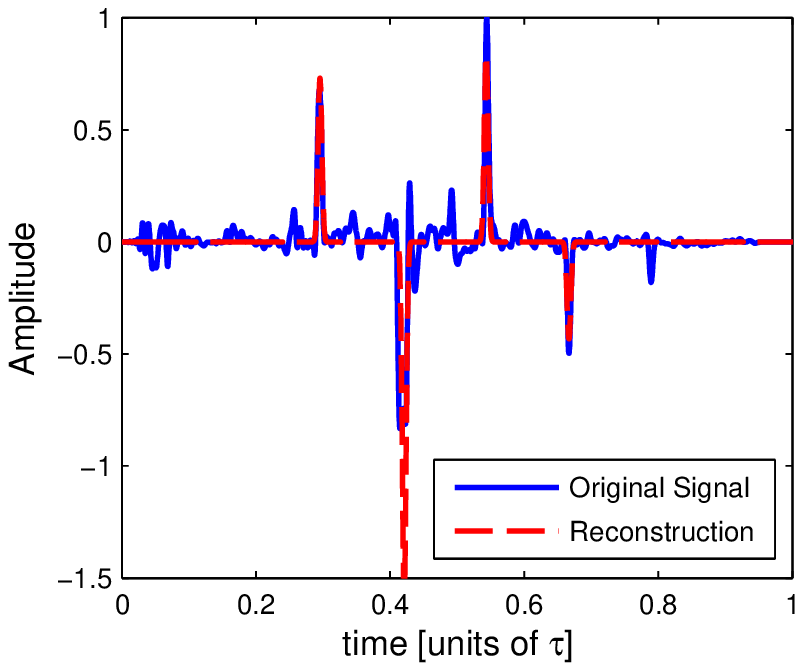}}
\subfigure[]{\includegraphics[scale=0.8]{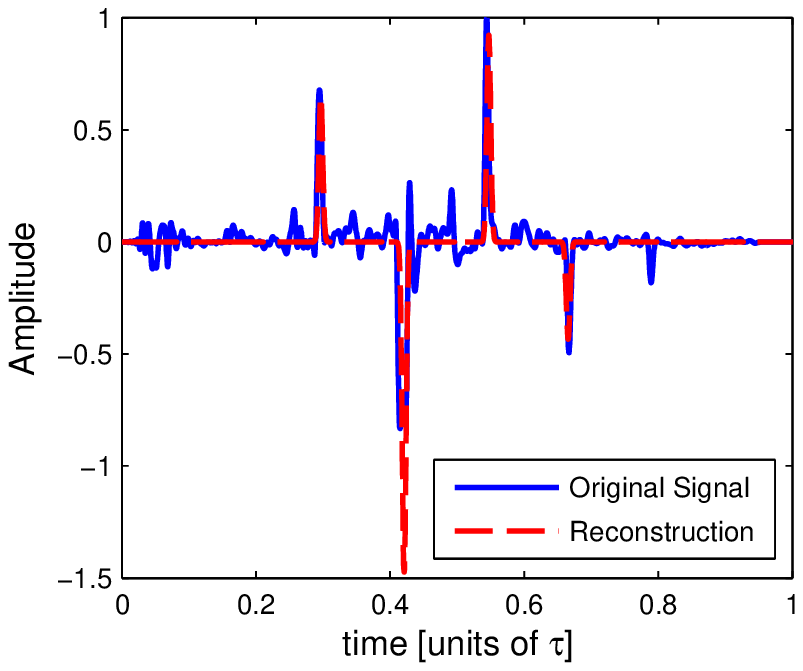}} } \caption{Applying our
$g_{3p}(t)$ filter method on real ultrasound imaging data. Results are shown vs. full demodulated signal which uses all $4160$ samples. Reconstructed signal (a) using $N=17$ samples only and hard-thresholding, and (b) using $N=33$ samples
without thresholding.} \label{SingleCh:fig_RealDataReconstruction}
\end{figure*}
}
We carried out the same experiment only now oversampling by a factor of 4, resulting in $N=33$ samples. Here no
hard-thresholding is required. The results are depicted in Fig.\RonenComment{~\ref{SingleCh:fig_RealDataReconstruction}}b, and are
very similar to our previous results. In both simulations, the estimation error in the pulse location is around $0.1\textrm{ mm}$.

Current ultrasound imaging technology %\footnote{Ronen comment: the standard method in current technology for 1D is simply to take the full 4160 sample recorded vector and display it (after demodulation), following the non-parametric point of view. Therefore, this is what I compare to in simulations and figures.}
operates at the high rate sampled data, e.g., $f_s = 20\textrm{ MHz}$ in our setting. Since there are usually 100 different elements in a single ultrasonic probe each sampled at a very high rate, data throughput becomes very high, and imposes high computational complexity to the system, limiting its capabilities. Therefore, there is a demand for lowering the sampling rate, which in turn will reduce the complexity of reconstruction. Exploiting the parametric point of view, our sampling scheme reduces the sampling rate by 2 orders of magnitude, from 4160 to around 30 samples in our setting, while estimating the locations of the scatterers with high accuracy.

%------------------------------------------------------------------%
\section{Conclusions}
We presented efficient sampling and reconstruction schemes for
streams of pulses. For the case of a periodic stream of pulses, we
derived a general condition on the sampling kernel which allows a
single-channel uniform sampling scheme. Previous work
\cite{Vetterli2002_Basic} is a special case of this general
result. We then proposed a class of filters, satisfying the
condition, with compact support. Exploiting the compact support of
the filters, we constructed a new sampling scheme for the case of
a finite stream of pulses. Simulations show this method exhibits
better performance than previous techniques \cite{Vetterli2002_Basic,DragottiStrangFix2007}, in terms of stability
in high order problems, and noise robustness. An extension to an
infinite stream of pulses was also presented. The compact support
of the filter allows for local reconstruction, and thus lowers the
complexity of the problem. Finally, we demonstrated the advantage
of our approach in reducing the sampling and processing rate of ultrasound
imaging, by applying our techniques to real ultrasound data.

%------------------------------------------------------------------%
\section*{Acknowledgements}
The authors would like to thank the anonymous reviewers for their valuable comments.

%------------------------------------------------------------------%
% General h(t)
%------------------------------------------------------------------%
\appendix
\section*{Proof of Theorem~\ref{SingleCh:theorem_optimal_b_k}}\label{SingleCh:appendix_optimal_b_k}
The MSE of the optimal linear estimator of the vector $\mathbf{x}$ from the measurement vector $\mathbf{y}$ is known to be \cite{kay93}
\begin{equation}\label{SingleCh:appendix_eq_MSE}
  \textrm{MSE} = \textrm{Tr}\left\{ \mathbf{R}_{xx} \right\} -
  \textrm{Tr}\left\{\mathbf{R}_{xy}\mathbf{R}_{yy}^{-1}\mathbf{R}_{yx} \right\}.
\end{equation}
The covariance matrices in our case are
\begin{align}\label{SingleCh:appendix_eq_Rxc}
  \mathbf{R}_{xy} &= \mathbf{R}_{xx}\mathbf{B}^*\mathbf{V}^* \\
\label{SingleCh:appendix_eq_Rcc}
  \mathbf{R}_{yy} &= \mathbf{VB}\mathbf{R}_{xx}\mathbf{B}^*\mathbf{V}^* + \sigma^2\mathbf{I},
\end{align}
where we used \eqref{SingleCh:eq_c_n_matrix_noisy}, and the fact that $\mathbf{R}_{ww} = \sigma^2\mathbf{I}$ since $\mathbf{w}$ is a white Gaussian noise vector. Under our assumptions on $\{t_l\}$ and $\{a_l\}$, denoting $h_k = H(2\pi k/\tau)$, and using \eqref{SingleCh:eq_X_k}
\begin{align}
\nonumber
  \left(\mathbf{R}_{xx}\right)_{k,k'} &= E\left\{X[k]X^*[k']\right\} \\
\nonumber
  &= \frac{1}{\tau^2} h_k h_{k'} \sum_{l=1}^L \sum_{l'=1}^L
    E\left\{a_l a_{l'}^* e^{-j\frac{2\pi}{\tau}(k t_l -k't_{l'})}\right\} \\
\nonumber
  &= \frac{\sigma_a^2}{\tau^2} h_k h_{k'} \sum_{l=1}^L E\left\{e^{-j\frac{2\pi}{\tau}(k-k')t_l}\right\}\\
\nonumber
  &= \frac{\sigma_a^2}{\tau^2}h_k h_{k'}\sum_{l=1}^L \int_0^{\tau} \frac{1}{\tau} e^{-j\frac{2\pi}{\tau}(k-k')t_l} {\rm d}t \\
\label{SingleCh:appendix_eq_Rxx}
  &= \frac{\sigma_a^2}{\tau^2}L |h_k|^2 \delta_{k,k'}.
\end{align}
Denoting by $\mathbf{\tilde H}$ a diagonal matrix with \textit kth element $|\tilde h_k|^2 = |h_k|^2 \sigma_a^2 L / \tau^2$ we have
\begin{equation}\label{SingleCh:appendix_eq_Rxx_final}
  \mathbf{R}_{xx} = \mathbf{\tilde H}.
\end{equation}

Since the first term of \eqref{SingleCh:appendix_eq_MSE} is independent of $\mathbf{B}$, minimizing the MSE with respect to $\mathbf{B}$ is equivalent to maximizing the second term in \eqref{SingleCh:appendix_eq_MSE}. Substituting \eqref{SingleCh:appendix_eq_Rxc},\eqref{SingleCh:appendix_eq_Rcc} and \eqref{SingleCh:appendix_eq_Rxx_final} into this term, the optimal $\mathbf{B}$ is a solution to
\begin{align}
\label{SingleCh:appendix_eq_optimal_B}
  &\max_{\mathbf{B}} \textrm{Tr}\big\{\mathbf{\tilde H}\mathbf{B}^*\mathbf{V}^*
  (\mathbf{VB}\mathbf{\tilde H}\mathbf{B}^*\mathbf{V}^* + \sigma^2\mathbf{I})^{-1}\mathbf{VB\mathbf{\tilde H}}\big\} \\
\nonumber
  &\quad \textrm{s.t. } \textrm{Tr}(\mathbf{B}^*\mathbf{B}) = 1.
\end{align}
Using the matrix inversion formula \cite{golub1996matrix},
\begin{align}
\nonumber
  &(\mathbf{VB}\mathbf{\tilde H}\mathbf{B}^*\mathbf{V}^* + \sigma^2\mathbf{I})^{-1} \\
\label{SingleCh:appendix_eq_2nd_term_take1}
  &= \frac{1}{\sigma^2}\bigg(\mathbf{I} -
  \mathbf{VB}\left( \sigma^2\mathbf{\tilde H}^{-1} + \mathbf{B}^*\mathbf{V}^*\mathbf{VB} \right)^{-1}\mathbf{B}^*\mathbf{V}^*\bigg).
%\nonumber
%  &\textrm{Tr}\bigg\{ \frac{1}{\sigma^2}\mathbf{\tilde H}\mathbf{\tilde H}^*\mathbf{B}^*\mathbf{V}^* \bigg(\mathbf{I} -
%  \mathbf{VB}\mathbf{\tilde H}\left( \sigma^2\mathbf{I} + \mathbf{\tilde H}^*\mathbf{B}^*\mathbf{V}^*\mathbf{VB}\mathbf{\tilde H} \right)^{-1} \\
%\label{SingleCh:appendix_eq_2nd_term_take1}
%  &\quad \cdot\mathbf{\tilde H}^*\mathbf{B}^*\mathbf{V}^*
%   \bigg) \mathbf{VB} \mathbf{\tilde H}\mathbf{\tilde H}^* \bigg\}.
\end{align}
It is easy to verify from the definition of $\mathbf{V}$ in \eqref{SingleCh:eq_c_n_matrix} that
\begin{equation}
  \left(\mathbf{V}^*\mathbf{V}\right)_{ik} =
  \sum_{l=0}^{N-1} e^{j\frac{2\pi}{N}l(k-i)} = N\delta_{k,i}.
\end{equation}
Therefore, the objective in \eqref{SingleCh:appendix_eq_optimal_B} equals
\begin{align}
\nonumber
  &\textrm{Tr}\bigg\{ \frac{N}{\sigma^2} \mathbf{\tilde H}\mathbf{B}^*
  \left(\mathbf{I} - \mathbf{B}\left( \frac{\sigma^2}{N}\mathbf{\tilde H}^{-1} + \mathbf{B}^*\mathbf{B}\right)^{-1}
  \mathbf{B}^*\right)\mathbf{B}\mathbf{\tilde H} \bigg\} \\
%\nonumber
%  &= \sum_{i=1}^{|\mathcal{K}|} \frac{|b_i|^2|\tilde h_i|^4}{|b_i|^2|\tilde h_i|^2+\sigma^2/N} \\
\label{SingleCh:appendix_Trace_Result}
  &= \sum_{i=1}^{|\mathcal{K}|} |\tilde h_i|^2 \left(1 - \frac{\sigma^2/N}{|b_i|^2|\tilde h_i|^2+\sigma^2/N} \right)
\end{align}
where we used the fact that $\mathbf{B}$ and $\mathbf{\tilde H}$ are diagonal.

We can now find the optimal $\mathbf{B}$ by maximizing \eqref{SingleCh:appendix_Trace_Result}, which is equivalent to minimizing the negative term:
\begin{equation}\label{SingleCh:appendix_eq_optim_problem}
  \min_{\mathbf{B}} \sum_{i=1}^{|\mathcal{K}|} \frac{|\tilde h_i|^2}{1+|b_i|^2|\tilde h_i|^2 N/\sigma^2} ,\, \textrm{s.t. } \sum_{i=1}^{|\mathcal{K}|} |b_i|^2 = 1.
\end{equation}
Denoting $\beta_i = |b_i|^2$, \eqref{SingleCh:appendix_eq_optim_problem} becomes a convex optimization problem:
\begin{align}\label{SingleCh:appendix_eq_optim_problem_convex}
  \min_{\beta_i} \sum_{i=1}^{|\mathcal{K}|} \frac{|\tilde h_i|^2}{1+\beta_i|\tilde h_i|^2 N/\sigma^2}
\end{align}
subject to
\begin{align}
\label{SingleCh:appendix_eq_constraint_beta_i_positive}
  \beta_i &\geq 0 \\
\label{SingleCh:appendix_eq_constraint_sum_beta_i}
  \sum_{i=1}^{|\mathcal{K}|} \beta_i &= 1.
\end{align}
%\begin{align}
%\nonumber
%  \mathbf{\hat B} = \arg & \min_{\mathbf{B}} \frac{\sigma^2}{N}\sum_i \frac{|\tilde h_i|^2}{1+b_i^2|\tilde h_i|^2 N/\sigma^2} \\
%  &\textrm{s.t. } \sum_i b_i^2 = 1.
%\end{align}
To solve \eqref{SingleCh:appendix_eq_optim_problem_convex} subject to \eqref{SingleCh:appendix_eq_constraint_beta_i_positive} and \eqref{SingleCh:appendix_eq_constraint_sum_beta_i}, we form the Lagrangian:
\begin{equation}\label{SingleCh:appendix_eq_Lagrangian}
  \mathcal{L} = \sum_{i=1}^{|\mathcal{K}|} \frac{|\tilde h_i|^2}{1+\beta_i|\tilde h_i|^2 N/\sigma^2} + \lambda\left( \sum_{i=1}^{|\mathcal{K}|} \beta_i - 1\right) - \sum_{i=1}^{|\mathcal{K}|} \mu_i \beta_i
\end{equation}
where from the Karush-Kuhn-Tucker (KKT) conditions \cite{bertsekas1999nonlinear}, $\mu_i \geq 0$ and $\mu_i \beta_i=0$. Differentiating \eqref{SingleCh:appendix_eq_Lagrangian} with respect to $\beta_i$ and equating to $0$
\begin{equation}\label{SingleCh:appendix_eq_diff_Lagrangian}
  \frac{|\tilde h_i|^4 N / \sigma^2}{(1+\beta_i|\tilde h_i|^2 N/\sigma^2)^2} + \mu_i = \lambda,
\end{equation}
so that $\lambda > 0$, since $\tilde h_i > 0$ by construction of $\mathbf{H}$ (see Theorem~\ref{SingleCh:prop_periodic_general_codition_filter}). If $\lambda > |\tilde h_i|^4 N/\sigma^2$ then $\mu_i >0$, and therefore, $\beta_i = 0$ from KKT. If $\lambda \leq |\tilde h_i|^4 N/\sigma^2$ then from \eqref{SingleCh:appendix_eq_diff_Lagrangian} $\mu_i=0$ and
\begin{equation}
  \beta_i = \frac{\sigma^2}{N}\left(\sqrt{\frac{N}{\lambda \sigma^2}}-\frac{1}{|\tilde h_i|^2}\right).
\end{equation}
The optimal $\beta_i$ is therefore
\begin{equation}\label{SingleCh:appendix_eq_optimal_beta_i}
  \beta_i = \left\{
  \begin{array}{l l}
    \frac{\sigma^2}{N}\left(\sqrt{\frac{N}{\lambda \sigma^2}}-\frac{1}{|\tilde h_i|^2}\right) & \lambda \leq |\tilde h_i|^4 N/\sigma^2 \\
    0 & \lambda > |\tilde h_i|^4 N/\sigma^2\\
  \end{array} \right.
\end{equation}
where $\lambda>0$ is chosen to satisfy \eqref{SingleCh:appendix_eq_constraint_sum_beta_i}. Note that from \eqref{SingleCh:appendix_eq_optimal_beta_i}, if $\beta_i\neq 0$ and $i<j$, then $\beta_j\neq 0$ as well, since $|\tilde h_i|$ are in an increasing order.
We now show that there is a unique $\lambda$ that satisfies \eqref{SingleCh:appendix_eq_constraint_sum_beta_i}. Define the function
\begin{equation}
  \mathcal{G}(\lambda) = \sum_{i=1}^{|\mathcal{K}|} \beta_i(\lambda) - 1,
\end{equation}
so that $\lambda$ is a root of $\mathcal{G}(\lambda)$. Since the $|\tilde h_i|$'s are in an increasing order, $|\tilde h_{|\mathcal{K}|}| = \max_i |\tilde h_i|$. It is clear from \eqref{SingleCh:appendix_eq_optimal_beta_i} that $\mathcal{G}(\lambda)$ is monotonically decreasing for $0 < \lambda \leq |\tilde h_{|\mathcal{K}|}|^4 N/\sigma^2$. In addition, $\mathcal{G}(\lambda) = -1$ for $\lambda > |\tilde h_{|\mathcal{K}|}|^4 N/\sigma^2$, and $\mathcal{G}(\lambda) > 0$ for $\lambda \rightarrow 0$. Thus, there is a unique $\lambda$ for which \eqref{SingleCh:appendix_eq_constraint_sum_beta_i} is satisfied.

Substituting \eqref{SingleCh:appendix_eq_optimal_beta_i} into \eqref{SingleCh:appendix_eq_constraint_sum_beta_i}, and denoting by $m$ the smallest index for which $\lambda \leq |\tilde h_{m+1}|^4 N/\sigma^2$, we have
\begin{equation}
  \sqrt{\lambda} = \frac{(|\mathcal{K}|-m)\sqrt{N/\sigma^2}}{N/\sigma^2 + \displaystyle\sum_{i=m+1}^{|\mathcal{K}|}1/|\tilde h_i|^2},
\end{equation}
completing the proof of the theorem.

%------------------------------------------------------------------%
\bibliography{IEEEabrv,Sampling_FRI_Signals_bib}

% Generated by IEEEtran.bst, version: 1.13 (2008/09/30)
\begin{thebibliography}{10}
\providecommand{\url}[1]{#1}
\csname url@samestyle\endcsname
\providecommand{\newblock}{\relax}
\providecommand{\bibinfo}[2]{#2}
\providecommand{\BIBentrySTDinterwordspacing}{\spaceskip=0pt\relax}
\providecommand{\BIBentryALTinterwordstretchfactor}{4}
\providecommand{\BIBentryALTinterwordspacing}{\spaceskip=\fontdimen2\font plus
\BIBentryALTinterwordstretchfactor\fontdimen3\font minus
  \fontdimen4\font\relax}
\providecommand{\BIBforeignlanguage}[2]{{%
\expandafter\ifx\csname l@#1\endcsname\relax
\typeout{** WARNING: IEEEtran.bst: No hyphenation pattern has been}%
\typeout{** loaded for the language `#1'. Using the pattern for}%
\typeout{** the default language instead.}%
\else
\language=\csname l@#1\endcsname
\fi
#2}}
\providecommand{\BIBdecl}{\relax}
\BIBdecl

\bibitem{BeyondBanlimitedSampling_YoninaTomer2009}
Y.~C. Eldar and T.~Michaeli, ``Beyond bandlimited sampling,'' \emph{{IEEE}
  Signal Process. Mag.}, vol.~26, no.~3, pp. 48--68, May 2009.

\bibitem{TomerYoninaSamplingBookChapter}
T.~Michaeli and Y.~C. Eldar, ``Optimization techniques in modern sampling
  theory,'' in \emph{Convex Optimization in Signal Processing and
  Communications}, Y.~C. Eldar and D.~Palomar, Eds.\hskip 1em plus 0.5em minus
  0.4em\relax Cambridge University Press, 2010.

\bibitem{Vetterli2002_Basic}
M.~Vetterli, P.~Marziliano, and T.~Blu, ``Sampling signals with finite rate of
  innovation,'' \emph{{IEEE} Trans. Signal Process.}, vol.~50, no.~6, pp.
  1417--1428, Jun 2002.

\bibitem{UnionOfSubspaces_LuDo2008}
Y.~M. Lu and M.~N. Do, ``A theory for sampling signals from a union of
  subspaces,'' \emph{{IEEE} Trans. Signal Process.}, vol.~56, no.~6, pp.
  2334--2345, June 2008.

\bibitem{Yonina2009compressed}
Y.~C. Eldar, ``{Compressed Sensing of Analog Signals in Shift-Invariant
  Spaces},'' \emph{{IEEE} Trans. Signal Process.}, vol.~57, pp. 2986--2997,
  2009.

\bibitem{RobustRecovery_UnionOfSubspaces_YoninaMishali2009}
Y.~C. Eldar and M.~Mishali, ``Robust recovery of signals from a structured
  union of subspaces,'' \emph{{IEEE} Trans. Inf. Theory}, vol.~55, no.~11, pp.
  5302--5316, Nov. 2009.

\bibitem{KfirYonina2009}
K.~Gedalyahu and Y.~C. Eldar, ``Time-delay estimation from low-rate samples: A
  union of subspaces approach,'' \emph{{IEEE} Trans. Signal Process.}, vol.~58,
  no.~6, pp. 3017--3031, 2010.

\bibitem{mishali2009blind}
M.~Mishali and Y.~C. Eldar, ``{Blind Multiband Signal Reconstruction:
  Compressed Sensing for Analog Signals},'' \emph{{IEEE} Trans. Signal
  Process.}, vol.~57, no.~3, p. 993, 2009.

\bibitem{XamplingPractice}
M.~Mishali, Y.~C. Eldar, and A.~Elron, ``Xampling: {S}ignal acquisition and
  processing in union of subspaces,'' \emph{CCIT Report no. 747, EE Dept.,
  Technion; arXiv.org 0911.0519}, Oct. 2009.

\bibitem{VetterliMagazine2008}
T.~Blu, P.~L. Dragotti, M.~Vetterli, P.~Marziliano, and L.~Coulot, ``Sparse
  sampling of signal innovations,'' \emph{{IEEE} Signal Process. Mag.},
  vol.~25, no.~2, pp. 31--40, March 2008.

\bibitem{Stoica1997}
P.~Stoica and R.~Moses, \emph{Introduction to Spectral Analysis}.\hskip 1em
  plus 0.5em minus 0.4em\relax Englewood Cliffs, NJ: Prentice-Hall, 1997.

\bibitem{Vetterli_FRI_InThePresenceOfNoise}
I.~Maravic and M.~Vetterli, ``Sampling and reconstruction of signals with
  finite rate of innovation in the presence of noise,'' \emph{{IEEE} Trans.
  Signal Process.}, vol.~53, no.~8, pp. 2788--2805, Aug. 2005.

\bibitem{DragottiStrangFix2007}
P.~L. Dragotti, M.~Vetterli, and T.~Blu, ``Sampling moments and reconstructing
  signals of finite rate of innovation: Shannon meets strang-fix,''
  \emph{{IEEE} Trans. Signal Process.}, vol.~55, no.~5, pp. 1741--1757, May
  2007.

\bibitem{UnserFRI2008}
C.~Seelamantula and M.~Unser, ``A generalized sampling method for
  finite-rate-of-innovation-signal reconstruction,'' \emph{{IEEE} Signal
  Process. Lett.}, vol.~15, pp. 813--816, 2008.

\bibitem{porat1997course}
B.~Porat, \emph{{A course in digital signal processing}}.\hskip 1em plus 0.5em
  minus 0.4em\relax John Wiley \& Sons, 1997.

\bibitem{hoffman1971linear}
K.~Hoffman and R.~Kunze, ``{Linear Algebra, 2nd edn.}'' 1971.

\bibitem{MUSIC_Schmidt}
R.~Schmidt, ``Multiple emitter location and signal parameter estimation,''
  \emph{{IEEE} Trans. Antennas Propag.}, vol.~34, no.~3, pp. 276--280, Mar
  1986.

\bibitem{MUSIC_Bienvenu}
G.~Bienvenu and L.~Kopp, ``Adaptivity to background noise spatial coherence for
  high resolution passive methods,'' in \emph{Acoustics, Speech, and Signal
  Processing, IEEE International Conference on ICASSP '80.}, vol.~5, Apr 1980,
  pp. 307--310.

\bibitem{ESPRIT_Kailath}
R.~Roy and T.~Kailath, ``{ESPRIT}-estimation of signal parameters via
  rotational invariance techniques,'' \emph{{IEEE} Trans. Acoust., Speech,
  Signal Process.}, vol.~37, no.~7, pp. 984--995, Jul 1989.

\bibitem{senior2006portable}
R.~Senior, J.~Chambers, C.~Roles, and N.~Roles, ``{Portable echocardiography: a
  review},'' \emph{British journal of cardiology}, vol.~13, no.~3, p. 185,
  2006.

\bibitem{mondillo2006hand}
S.~Mondillo, G.~Giannotti, P.~Innelli, P.~Ballo, and M.~Galderisi, ``{Hand-held
  echocardiography: its use and usefulness},'' \emph{International journal of
  cardiology}, vol. 111, no.~1, pp. 1--5, 2006.

\bibitem{kay93}
S.~M. Kay, \emph{Fundamentals of Statistical Signal Processing: Estimation
  Theory}.\hskip 1em plus 0.5em minus 0.4em\relax Englewood Cliffs, NJ:
  Prentice Hall, 1993.

\bibitem{golub1996matrix}
G.~H. Golub and C.~F. Van~Loan, \emph{{Matrix computations}}.\hskip 1em plus
  0.5em minus 0.4em\relax Johns Hopkins Univ Pr, 1996.

\bibitem{bertsekas1999nonlinear}
D.~Bertsekas, W.~Hager, and O.~Mangasarian, \emph{{Nonlinear
  programming}}.\hskip 1em plus 0.5em minus 0.4em\relax Athena Scientific
  Belmont, MA, 1999.

\end{thebibliography}

\end{document}